\documentclass[a4paper,UKenglish,cleveref, autoref, thm-restate]{lipics-v2021}



\pdfoutput=1 
\hideLIPIcs  

\graphicspath{{./graphics/}} 

\bibliographystyle{plainurl}

\title{Algorithm and Strategy Construction for Sure-Almost-Sure Stochastic Parity Games} 


\author{Laurent Doyen}
{Universit\'{e} Paris-Saclay, CNRS, ENS Paris-Saclay, LMF, Gif-sur-Yvette, France}{ldoyen@lmf.cnrs.fr}{https://orcid.org/0000-0003-3714-6145}{}
\author{Shibashis Guha}{Tata Institute of Fundamental Research, Mumbai, India}{shibashis@tifr.res.in}{https://orcid.org/0000-0002-9814-6651}{}
\authorrunning{L. Doyen and S. Guha}



\ccsdesc[100]{Mathematics of computing~Probability and statistics~Stochastic processes} 

\keywords{stochastic games, parity objectives, reactive synthesis} 

\category{} 

\relatedversion{} 



\nolinenumbers 


\usepackage[commandnameprefix=always]{changes}

\definechangesauthor[name={Laurent}, color=purple]{LD}
\definechangesauthor[name={Shibashis}, color=teal]{SG}



\usepackage{graphicx} 
\usepackage[pdflatex=true]{gastex}   

\usepackage{xcolor}
\presetkeys{todonotes}{disable}{}
\usepackage{tikz}
\usetikzlibrary{arrows.meta, positioning, matrix, shapes, automata,positioning,arrows,shapes.geometric}
\tikzset{
    ->,
    >=latex,
    node distance=1.3cm,
    every state/.style={thick, fill=gray!20, minimum size=3pt, inner sep=3pt},RequirePackage
    random/.style={diamond, thick, fill=gray!20, inner sep=2.5pt},
    square/.style={regular polygon,regular polygon sides=4, thick, fill=gray!20, inner sep=1.5pt},
    initial text=$ $,
}

\ifpdf
\else
  \AtBeginDocument{%
        \RemoveFromHook{shipout/firstpage}[hyperref]%
        \RemoveFromHook{shipout/before}[hyperref]%
  }
\fi

\RequirePackage{algorithm}
\RequirePackage{algpseudocodex}
\RequirePackage{listings}
\RequirePackage{xspace}

\RequirePackage{amsmath,amssymb,amsthm}

\algrenewcommand\algorithmicrequire{\textbf{Input:}}
\algrenewcommand\algorithmicensure{\textbf{Output:}}
\newcommand{\as}{{\sf AS}}
\newcommand{\sure}{{\sf S}}

\def\abs#1{\ensuremath{\lvert #1\rvert}} 
\def\closure#1{\ensuremath{\lceil #1\rceil}}
\newcommand{\Nat}{\mathbb{N}}

\let\epsilon\varepsilon
\let\emptyset\varnothing

\newcommand{\Prob}{\mathsf{Pr}}
\newcommand{\tuple}[1]{\langle #1 \rangle}
\newcommand{\dist}{{\mathcal D}}
\newcommand{\Supp}{{\sf Supp}}

\newcommand{\A}{\mathcal{A}}

\newcommand{\NP}{\mathrm{NP}}
\newcommand{\coNP}{\mathrm{coNP}}
\newcommand{\PTime}{\mathrm{Poly}}

\newcommand{\ASPG}{\mathsf{ASPG}}
\newcommand{\Parity}{\ensuremath{{\mathsf{Parity}}}}
\newcommand{\parity}{\ensuremath{p}}

\newcommand{\PO}{{Player~$1$}\xspace}
\newcommand{\PT}{{Player~$2$}\xspace}

\renewcommand{\Game}{\mathcal{G}} 

\newcommand{\Bnd}{{\sf Bnd}}

\newcommand{\EdgeValues}[2]{{\color{red}#1} \ifblank{#2}{}{ , {\color{blue}#2} }  }

\newcommand{\Outcome}{\mathsf{Out}}

\newcommand{\Last}{{\sf Last}}

\newcommand{\SurePre}{{\sf SurePre}}
\newcommand{\PosPre}{{\sf PosPre}}
\newcommand{\SureAttr}{{\sf SureAttr}}
\newcommand{\PosAttr}{{\sf PosAttr}}
\newcommand{\unlucky}{{\sf unlucky}}
\newcommand{\Unlucky}{{\sf Unlucky}}

\renewcommand\blue[1]{\textcolor{blue}{#1}} 
\definecolor{gray50}{gray}{0.45}
\def\gray50#1{\textcolor{gray50}{#1}}

\newcommand{\aut}{\mathcal{A}}
\newcommand{\proj}[2]{#1_{\upharpoonright #2}}

\newenvironment{longversion}{}{}

\newenvironment{shortversion}{}{}

\begin{document}
\excludecomment{shortversion}
\maketitle

\begin{abstract}
We consider turn-based stochastic two-player games with a combination 
of a parity condition that must hold surely, that is in all possible
outcomes, and of a parity condition that must hold almost-surely,
that is with probability~$1$. The problem of deciding the existence of
a winning strategy in such games is central in the framework of synthesis 
beyond worst-case where a hard requirement that must hold surely
is combined with a softer requirement. Recent works showed that
the problem is coNP-complete, and infinite-memory strategies are necessary in general,
even in one-player games (i.e., Markov decision processes). However, memoryless
strategies are sufficient for the opponent player.
Despite these comprehensive results, the known algorithmic solution 
enumerates all memoryless strategies of the opponent, which is exponential
in all cases, and does not construct a winning strategy when one exists. 

We present a recursive algorithm, based on a characterisation of the winning region, 
that gives a deeper insight into the problem. In particular, we show how to 
construct a winning strategy to achieve the combination of sure and almost-sure
parity, and we derive new complexity and memory bounds for special classes
of the problem, defined by fixing the index of either of the two parity conditions.
\end{abstract}

\section{Introduction}

Turn-based games provide a simple model for the synthesis of system components 
satisfying a given specification. A component is viewed as a player
trying to achieve a winning condition defined by the specification, 
regardless of the behaviour of other components. Abstracting the other
components as a combination of an adversarial and stochastic environment
gives rise to the standard model of two-player stochastic games played on graphs~\cite{FV97,automata},
where the vertices of the graph are partitioned into system, environment, and stochastic vertices. 
Starting from an initial vertex, a play is an infinite path in which the successor 
of each vertex is chosen by the player owning that vertex (for system and environment vertices)
or according to a probability distribution (for stochastic vertices). In general,
strategies define the choice of the players, which may depend on the sequence of 
previously visited vertices, called the history. The synthesis of a component
corresponds to constructing a strategy for the system that maximizes the probability
to win (i.e., to satisfy the winning condition), regardless of the strategy of the environment.

We consider specifications described by parity conditions, which are  
canonical to express all $\omega$-regular requirements on plays~\cite{Thomas97,CN06}, 
with the property that memoryless optimal strategies exists for both 
players~\cite[Theorem~1.2]{CJH04}, maximizing the probability to win without having to remember
the history of the game. Deciding which player  
has a winning strategy in parity games lies in NP~$\cap$~coNP~\cite[Corollary~1.1]{CJH04}, 
but no polynomial-time algorithm is known for that problem, even in non-stochastic games.  
Improved bounds and quasi-polynomial algorithms exist though~\cite{Jur98,CJKLS17,JL17,Leh18}.

In component synthesis, a natural requirement is to ensure that the winning condition
holds with probability~$1$, called \emph{almost-sure winning}. For example, to 
send a message over a network that may lose messages with a fixed probability $p$,
a strategy is to keep sending the same message until delivery, 
which is almost-sure winning because the probability of delivery failure 
after $k$ attempts is vanishing, as $p^k \to 0$ when $k \to \infty$.
An even stronger requirement is that all possible outcomes, no matter their probability 
should, satisfy the specification, called \emph{sure winning}. This corresponds
to the worst-case scenario where the choices at both environment and 
stochastic vertices are purely antagonistic. 
Sure winning is typically used for critical requirements, while almost-sure winning
is used to ensure good performance in a stochastic environment.
Specifications that combine both a sure and almost-sure parity condition
have been considered recently~\cite{BRR17,CP19}, as a fundamental instance of 
the framework of synthesis beyond worst-case~\cite{BFRR17}. 
For such specifications, games are determined~\cite[Theorem~5]{CP19} and 
the synthesis problem is coNP-complete~\cite[Corollary~7]{CP19};
memoryless strategies are sufficient for the environment player~\cite[Theorem~6]{CP19}, and infinite
memory is necessary for the system player~\cite[Example~1]{BRR17}, even in MDPs (Markov decision processes,
i.e., games with no environment vertices).

Despite these comprehensive results, the picture is not complete. The first 
issue is algorithmic. The coNP algorithm~\cite{CP19} is to guess a memoryless strategy
for the environment player, and then to check that the system player is losing
in the resulting MDP (using determinacy), which can be done in NP~$\cap$~coNP~\cite[Theorem~15]{BRR17}.
In practice, enumerating the memoryless strategies for the environment is always exponential,
thus inefficient.
Another related issue is the construction of a winning strategy for the system
player, which is central in the context of synthesis. Only in the case of MDPs,
the structure of (infinite-memory) winning strategies has been described~\cite[Definition~10]{BRR17}.
However, the current solution of the synthesis problem for games gives no clue 
about how the system should play in order to satisfy the specification.
We argue that these issues reflect the lack of a deeper insight into games 
with a combination of sure and almost-sure parity conditions. 

\subparagraph*{Contribution}
We outline the main contributions of the paper.
\begin{itemize}
\item We present a recursive algorithm for deciding the existence of a strategy
that is both sure winning for a first parity condition, and almost-sure winning
for a second parity condition. The algorithm computes the winning set of
the system player in worst-case exponential time. The correctness is established
using characterisations of the winning sets for each player, and gives a better
insight into the structure and properties of such games. 

\item We obtain a precise description of the winning strategies
for the system player, and how to construct them by switching between simple (memoryless) 
strategies, and repeat them for long enough using unbounded counters. 
Intuitively, a strategy that is only almost-sure winning for both parity conditions
may have violating outcomes, but representing a mass of probability~$0$.
Eventually switching to a sure-winning strategy for the first parity condition is necessary to ensure all outcomes satisfy that condition, however possibly at the price of violating the second condition.
The crux is to switch so rarely that those violating outcomes represent a mass of probability $0$, maintaining the almost-sure satisfaction of the second parity condition.
A similar idea works in MDPs, combining strategies that almost-surely visit 
certain end-components~\cite{BRR17}. However, the extension to stochastic games 
is non-trivial as the notion of end-components no longer exists in games.

\item As a consequence of the new algorithm and analysis, we get new bounds
on the complexity and memory requirement of subclasses of the problem,
involving the parity index (i.e., the number of priorities defining a parity condition). 
When either of the two parity conditions has a fixed index, we show that 
the synthesis problem becomes solvable in NP~$\cap$~coNP; when both have 
a fixed index, the problem is in P. When the sure condition is coB\"uchi,
memoryless strategies are sufficient for the system player,
and when the almost-sure condition is B\"uchi,
finite memory is sufficient. 
This gives a tight and complete picture, as it is known
that the combination of a sure B\"uchi and almost-sure coB\"uchi condition
requires infinite memory, already in MDPs~\cite[Example~1 \& Theorem~18]{BRR17}.

\item A technical result of independent interest is the construction of a deterministic parity automaton (DPW)
equivalent to the intersection of two DPW, with a blow up that is exponential 
only in the parity index of either of the two DPW. Beyond the product of the state
spaces of the two automata, the blow up is by a factor $\sqrt{\min(d_1^{d_2},d_2^{d_1})}$
where $d_i$ ($i=1,2$) is the parity index of the $i$-th automaton. 
Our solution is not particularly involved, but we could not find a construction with this property in the literature.
The construction is crucial in proving that the synthesis problem is in $\NP \cap \coNP$ when one of the parity conditions has a fixed index.
\end{itemize}




\subparagraph*{Related work}
Synthesis of strategies that satisfy a combination of multiple objectives 
that arises naturally in many applications, such as verification~\cite{AKV16}, 
planning~\cite{GZ18}, and AI~\cite{RVWD13}.
In non-stochastic games and thus for sure winning, combinations of parity objectives~\cite{CHP07}
and of mean-payoff objectives~\cite{VCDHRR15} have been considered.
In stochastic games, the quantitative analysis of multi-objective conditions saw very little 
progress, in particular for Boolean objectives (such as the parity condition), and the known 
positive results hold for reachability objectives or subclasses of games~\cite{CFKSW13,BKW24}.
As observed in~\cite{BRR17}, combinations of quantitative objectives is sometimes conceptually
easier than Boolean objectives. Almost-sure winning for multiple quantitative objectives
has been studied in the case of mean-payoff conditions~\cite{CD16a}. 
The framework of beyond-worst-case synthesis was initially studied
with quantitative objectives such as shortest path and mean-payoff~\cite{BFRR17}.

We already discussed the results that are closest to our work~\cite{BRR17,CP19}. 
We note that the notion of limit-sure winning (i.e., winning with probability arbitrarily close to~$1$) has also been considered in
combination with sure winning, and the synthesis problem is coNP-complete,
where the coNP bound is obtained by guessing a memoryless strategy for
the environment player, and then solving an MDP with sure-limit-sure parity condition,
which is shown to be polynomial-time reducible to the sure-almost-sure problem
in MDPs~\cite{CP19}. Boolean combinations of sure and almost-sure winning
have been considered for MDPs~\cite{BGR20}. A natural extension would be to consider arbitrary probability
of satisfaction (instead of probability~$1$) in combination with sure winning. 
This question is solved for MDPs~\cite{BRR17}, and remains open for games.

\begin{shortversion}
Omitted proofs are available in the extended version in the appendix.
\end{shortversion}





\section{Preliminaries}

\subsection{Basic definitions}
A \emph{probability distribution} on a finite set~$V$ is a
function $d : V \to [0, 1]$ such that $\sum_{v \in V} d(v)= 1$. 
The \emph{support} of~$d$ is the set $\Supp(d) = \{v \in V \mid d(v) > 0\}$. 
We denote by $\dist(V)$ the set of all probability distributions on~$V$. 
Given $v \in V$, we denote by $1_v$ the \emph{Dirac distribution} on~$v$ that 
assigns probability~$1$ to~$v$.

A \emph{stochastic game} $\Game = \tuple{V, (V_1,V_2,V_{\Diamond}), E, \delta}$ consists 
of a finite set $V$ of vertices, 
mutually disjoint (but possibly empty) sets $V_1,V_2,V_{\Diamond}$ 
of respectively Player $1$, Player $2$, and probabilistic vertices 
such that $V =  V_1 \cup V_2 \cup V_{\Diamond}$,
a set $E \subseteq V \times V$ of edges, 
and a probabilistic transition function $\delta: V_{\Diamond} \to \dist(V)$.
We denote by $E(v) = \{v' \mid (v,v') \in E \}$ the set of all successors of a vertex $v \in V$,
and we require that every vertex has at least one successor, $E(v) \neq \emptyset$ for all vertices $v \in V$,
and that $\Supp(\delta(v)) = E(v)$ for all probabilistic vertices $v \in V_{\Diamond}$.
The decision problems in this paper are independent of the transition probabilities, and therefore  stochastic games can be specified by giving only the partition $(V_1,V_2,V_{\Diamond})$
and the set $E$ of edges.
A \emph{Markov decision process (MDP)} for Player~$i$ is the special case
of a stochastic game where one of the players has no role, that is $V_{3-i} = \emptyset$.

The game is played in rounds, starting from an initial vertex $v_0$. 
In a round at vertex $v$, if $v \in V_i$ ($i=1,2$), then Player~$i$ chooses a successor $v' \in E(v)$ (which is always possible since $E(v) \neq \emptyset$),
and otherwise $v \in V_{\Diamond}$ and a successor $v' \in E(v)$ is chosen with probability $\delta(v)$.
The next round starts at vertex $v'$.
A \emph{play} is an infinite path $v_0, v_1, \dots$ such that $(v_j,v_{j+1}) \in E$ for all $j \geq 0$.
A \emph{history} is a finite prefix of a play. For $\tau = v_0, v_1, \dots, v_k$ and $j \leq k$,
let $\abs{\tau} = k+1$ be the length of $\tau$, define $\Last(\tau) = v_k$, and $\tau(j \rhd) = v_j, \dots, v_k$.
Given a function $\Omega: V \to Z$ defined on vertices, define $\Omega(\tau) = \Omega(v_0), \Omega(v_1), \dots, \Omega(v_k)$
the extension of $\Omega$ to histories.

A \emph{strategy} for Player~$i$ ($i=1,2$) in $\Game$ is a function $\sigma_i: V^* V_i \to V$ 
such that $(v,\sigma_i(\tau v)) \in E$ is an edge for all histories $\tau \in V^*$ and vertices $v \in V_i$.
A play $v_0, v_1, \dots$ is an \emph{outcome} of $\sigma_i$ if $v_{j+1} = \sigma_i(v_0, v_1, \dots, v_j)$
for all $j \geq 0$ such that $v_j \in V_i$. We denote by $\Outcome(v,\sigma)$
the set of all outcomes of a strategy $\sigma$ with initial vertex $v$,
and by $\Outcome(v,\sigma_1,\sigma_2)$ the set $\Outcome(v,\sigma_1) \cap \Outcome(v,\sigma_2)$.    
We denote by $\Prob_{v_0}^{\sigma_1,\sigma_2}$ (or simply $\Prob_{v_0}^{\sigma_i}$ in the case of MDPs for Player~$i$)
the standard probability measure 
on the sigma-algebra over the set of (infinite) plays with initial vertex $v_0$, 
generated by the cylinder sets spanned by the  histories~\cite{BK08}.

A strategy $\sigma_i$ uses \emph{finite memory} if there exists a right congruence $\approx$
of finite index (i.e., that can be generated by a finite-state transducer) over 
histories such that $\tau \approx \tau'$ implies $\sigma_i(\tau)  = \sigma_i(\tau')$.

An objective is a measurable set $\psi \subseteq V^{\omega}$ of plays.
Given a priority function $\Omega:V \to \Nat$, we consider the classical parity objective
$\Parity(\Omega) = \{v_0, v_1, \dots \in V^{\omega} \mid \limsup_{j \to \infty} \Omega(v_j) \text{ is even} \}$
and denote by $\Parity^\textsf{c}(\Omega) = V^{\omega} \setminus \Parity(\Omega)$
the complement of that objective (which can itself be defined as a parity objective, using
priority function $\Omega^{+1}$ where $\Omega^{+1}(v) = \Omega(v) + 1$).
B\"uchi objectives are the special case of priority functions $\Omega:V \to \{1,2\}$ that require visiting priority~$2$ infinitely often, while coB\"uchi objectives are the special case of priority functions $\Omega:V \to \{0,1\}$ that require visiting priority $1$ only finitely often.
A key property of parity objectives is prefix-independence: $\theta \in \Parity(\Omega)$
if and only if $\tau \theta \in \Parity(\Omega)$, for all $\theta \in V^{\omega}$ and $\tau \in V^*$.
Given a set $U \subseteq V$ and priority $p \in \Nat$, we denote by $U(\Omega, p) = \{v \in U \mid \Omega(v) = p \}$
the set of vertices of $U$ with priority $p$, and by $\pi_{\Omega \geq p}(\tau)$
the sequence obtained from $\Omega(\tau)$ by removing all priorities smaller than $p$. 
For example, if $\tau = v_2 v_3 v_1 v_3 v_4 v_1 v_3$ and $\Omega(v_i) = i$,
then $\pi_{\Omega \geq 3}(\tau) = 3343$.

A strategy $\sigma_1$ of Player~$1$ is \emph{sure winning} for objective $\psi$ from an initial vertex $v$ if $\Outcome(v,\sigma_1) \subseteq \psi$,
and \emph{almost-sure winning} if $\Prob_{v}^{\sigma_1,\sigma_2}(\psi) = 1$
for all strategies $\sigma_2$ of Player~$2$.
Note that sure winning for $\psi$ immediately implies almost-sure winning for $\psi$ (not the converse),
and that a strategy that is almost-sure winning for $\psi_1$ and for $\psi_2$
is almost-sure winning for $\psi_1 \cap \psi_2$ (here the converse holds).

We consider the sure-almost-sure synthesis problem for parity objectives,
which is to decide, given a stochastic game $\Game$ with initial vertex $v$
and two priority functions $\Omega_1,\Omega_2$, whether there exists a strategy $\sigma_1$ of Player~$1$
that is both sure winning for objective $\Parity(\Omega_1)$ and almost-sure winning for objective $\Parity(\Omega_2)$.
We call such a strategy \emph{sure-almost-sure} winning (assuming that the game $\Game$,
initial vertex $v$, and priority functions $\Omega_1,\Omega_2$ are clear from the context).
We call the set of all initial vertices 
for which such a strategy exists
the sure-almost-sure winning region, and we call the complement the
sure-almost-sure spoiling region. 
Sure winning regions and almost-sure winning regions are defined analogously.

It is known that finite-memory strategies are not sufficient for sure-almost-sure winning,
already in MDPs with a combined sure B\"uchi and almost-sure coB\"uchi objective~\cite{BRR17},
as illustrated in \figurename~\ref{fig:infinite-memory} (in figures, Player-$1$ vertices are shown as circles,
Player-$2$ vertices as boxes, and probabilistic vertices as diamonds). 
There is a strategic choice only in $v_a$. 
Surely visiting infinitely often a vertex in $\{v_c,v_d\}$, and at the same time almost-surely visiting only finitely often $v_d$ can be done by choosing $v_b$ from $v_a$ more and more often as compared to choosing $v_d$, in a way such that choosing $v_d$ infinitely often has probability~$0$, yet $v_d$ is visited infinitely often in all outcomes where $v_c$ is no longer visited from some point on.

\begin{figure}[!t]
   \begin{center}
	\input{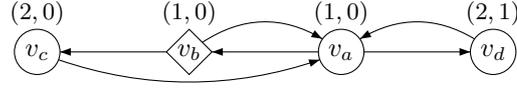}
   \end{center}
\caption{An MDP where the combination of sure B\"uchi and almost-sure coB\"uchi requires infinite memory~\cite{BRR17}. 
Each node $v$ is labeled by the pair of priorities $(\Omega_1(v), \Omega_2(v))$,
inducing a (sure) B\"uchi condition to visit a vertex in $\{v_c,v_d\}$ infinitely often, and an (almost-sure) coB\"uchi condition to visit $v_d$ finitely often.
 \label{fig:infinite-memory}}
\end{figure}



\todo{define $\sure(\psi)$ and $\as(\psi)$ ? (if used)}

\subsection{Subgames and traps} \label{sec:subgames_and_traps}
Let $\SurePre_i: 2^V \to 2^V$ and $\PosPre_i: 2^V \to 2^V$ 
be respectively the sure and positive predecessor operator for Player~$i$ defined, 
for all $U \subseteq V$, 
by $\SurePre_i(U) = \{v \in V_i \mid E(v) \cap U \neq \emptyset \} 
\cup \{v \in V_{3-i} \cup V_{\Diamond} \mid E(v) \subseteq U \}$,
and by $\PosPre_i(U) = \{v \in V_i  \cup V_{\Diamond} \mid E(v) \cap U \neq \emptyset \} 
\cup \{v \in V_{3-i} \mid E(v) \subseteq U \}$.

Given a set $T \subseteq V$, the sure attractor $\SureAttr_i(T)$ for Player~$i$ 
is the least fixed point of the operator $X \mapsto \SurePre_i(X) \cup T$,
thus $\SureAttr_i(T) = \bigcup_{j \geq 0} \SurePre_i^j(T)$ (where $\SurePre_i^0(T) = T$)
and the positive attractor $\PosAttr_i(T)$ for Player~$i$ is the least fixed
point of the operator $X \mapsto \PosPre_i(X) \cup T$.

Intuitively, from the vertices in $\SureAttr_i(T)$ Player~$i$ has a memoryless 
strategy to ensure eventually (in fact, within at most $\abs{V}$ rounds) reaching 
a vertex in $T$ regardless of the choices of Player~$3-i$ 
and of the outcome of the probabilistic transitions; from the vertices in $\PosAttr_i(T)$ 
Player~$i$ has a memoryless 
strategy to ensure reaching with positive probability $T$ regardless of the choices of Player~$3-i$.
We call such strategies the attractor strategies.

A set $U \subseteq V$ induces a \emph{subgame} of $\Game$, denoted by $\Game_{\upharpoonright U}$,
if $E(v) \cap U \neq \emptyset$ for all $v \in U \cap (V_1 \cup V_2)$, and $E(v) \subseteq U$ 
for all $v \in U \cap V_{\Diamond}$, that is, the two players can cooperate to keep
the play forever in $U$. Such a set $U$ is a \emph{trap} for Player~$i$ if, moreover, 
$E(v) \subseteq U$ for all $v \in U \cap V_{i}$. For example, the set $V \setminus \PosAttr_i(T)$
is always a trap for Player~$i$. 
A key property of traps is that if Player~$i$ has a sure (resp., almost-sure) winning strategy in $\Game_{\upharpoonright U}$
(for some objective $\psi$)
from an initial vertex $v \in U$ where $U$ is a trap for Player~$3-i$,
then Player~$i$ has a sure (resp., almost-sure) winning strategy in $\Game$ from $v$ for $\psi$. 

Consider a set $U \subseteq V$ that only satisfies the condition $E(v) \cap U \neq \emptyset$ for all $v \in U \cap (V_1 \cup V_2)$,
as it is the case for $U = V \setminus \SureAttr_i(T)$.
Let $\Bnd(U) = \{ v \in U \cap V_{\Diamond} \mid E(v) \not \subseteq U\}$ be the set 
of (boundary) probabilistic vertices in $U$ with a successor outside $U$. 
Define the \emph{subgame closure} $\closure{\Game}_{U} = \tuple{U', (U_1,U_2,U_{\Diamond}), E', \delta'}$ 
by redirecting the (probabilistic) transitions leaving $U$ to a sink vertex, formally (see also \figurename~\ref{fig:even}):
\begin{itemize}
\item $U' = U \cup \{v_{\sf sink}\}$, 
\item $U_1 =  U \cap V_1$; $U_2 =  U \cap V_2$; $U_{\Diamond} =  (U \cap V_{\Diamond}) \cup \{v_{\sf sink}\}$,
\item $E' = E \cap (U \times U) \ \cup\  \Bnd(U) \times \{ v_{\sf sink} \} \ \cup\  \{ (v_{\sf sink},v_{\sf sink})\}$,
\item $\delta'(v)(v') = \delta(v)(v')$ for all $v \in U_{\Diamond}$ and $v' \in U$; $\delta'(v)(v_{\sf sink}) = \sum_{v' \in V \setminus U} \delta(v)(v')$ for all $v \in U_{\Diamond}$; and $\delta'(v_{\sf sink}) = 1_{v_{\sf sink}}$.
\end{itemize}
Note that if $U$ induces a subgame of $\Game$, then $\closure{\Game}_{U}$ is the disjoint union
of $\Game_{\upharpoonright U}$ and $\{v_{\sf sink}\}$.
We generally assign priority $\Omega_1(v_{\sf sink}) = \Omega_2(v_{\sf sink}) = 0$ to the sink vertex,
making it winning for Player~$1$. Then the key property is that 
if Player~$1$ does not have a sure (resp., almost-sure) winning strategy in $\closure{\Game}_{U}$
(for some objective $\psi$) from an initial vertex $v \in U$,
then Player~$1$ does not have a sure (resp., almost-sure) winning strategy in $\Game$ from $v$ for $\psi$\todo{Is this the case only when the maximum priority $d$ is odd?}.

\section{Conjunction of Parity Conditions}\label{sec:conjParity}

It will be useful in the sequel to consider conjunctions of parity conditions: Our algorithm (Section~\ref{sec:alg}) needs to compute the almost-sure winning region for the conjunction $\Parity(\Omega_1) \cap \Parity(\Omega_2)$ of two parity objectives.

We revisit the problem of transforming an objective specified as the conjunction
of two parity conditions into an objective specified by a single parity condition.
The transformation is always possible because the parity condition is a canonical 
way to express $\omega$-regular objectives, and the class of $\omega$-regular objectives
is closed under Boolean operations~\cite{Thomas97}. However, we were unable to
find a direct construction in the literature, in particular a construction
that would incur only a polynomial blowup when either of the two priority
functions has a fixed range (i.e., when the number of priorities is fixed 
in one of the parity conditions).

Such a construction being of independent interest (e.g., in formal languages
and automata theory), we present it using automata models without referring to
objectives and games. A \emph{deterministic automaton} over a finite alphabet 
$\Sigma$ is a tuple $\A=\tuple{Q,q_{\epsilon},\delta,\alpha}$ consisting 
of a finite state space $Q$, an initial state $q_{\epsilon} \in Q$,
a transition function $\delta: Q \times \Sigma \to Q$, and an acceptance
condition (which defines a subset of $Q^{\omega}$), here a conjunction of parity conditions.
A run of $\A$ on an infinite word $w = \sigma_0 \sigma_1 \dots$ is a 
sequence $q_0 q_1 \dots$ such that $q_0 = q_{\epsilon}$ and $\delta(q_i,\sigma_i) = q_{i+1}$
for all $i\geq 0$, hence every word has a unique run. 
Given priority functions $\Omega_1,\Omega_2: Q \to \Nat$, the set $L(\A)$ of all words
whose run satisfies the parity conditions $\Parity(\Omega_1)$ and $\Parity(\Omega_2)$
is called the \emph{language} of $\A$. We denote such automata by D2PW,
and in the special case where $\Omega_1 = \Omega_2$ by DPW, which is the classical
deterministic parity automata.

The blowup in the transformation of D2PW into DPW is inevitably exponential,
as shown in~\cite[Theorem~9]{Boker18} and confirmed by the fact that games
with a conjunction of two parity objectives are coNP-complete~\cite{CHP07},
whereas with a single parity objective they are in NP~$\cap$~coNP~\cite{EJ91}.  
A natural path for this transformation is to view the parity condition 
as a special case of Streett conditions~\cite{Boker18} which are closed under intersection,
then to convert the Streett automaton (obtained from the D2PW) into a DPW.
Priority functions with range $[d_1] = \{0,\dots,d_1\}$ and $[d_2] = \{0,\dots,d_2\}$
give $k = O(d_1 + d_2)$ Streett pairs, and the blowup in the conversion to DPW 
is by a factor in $O(k^k)$~\cite{Safra92}. A key observation is that this is an exponential blowup,
even if one of $d_1$ or $d_2$ is constant. The approach ``flattens'' the parity
conditions by ``merging'' them, which also happens when using (nondeterministic) B\"uchi automata\footnote{There is a quadratic translation of DPW into nondeterministic B\"uchi automata,
the intersection of B\"uchi automata gives a B\"uchi automaton with polynomial blowup,
but going back to DPW is exponential~\cite{Piterman07}. This approach also
``flattens'' the parity condition in the sense that the exponential is in both $d_1$ 
and $d_2$.}~\cite[Section~3.1]{Boker18}.

In contrast, we present a direct construction of DPW from D2PW that is conceptually simple and with a blowup of at most
$O(\max(d_1^{d_2/2},d_2^{d_1/2}))$, thus polynomial if either of the priority functions has a fixed range.

\paragraph*{The construction}

\begin{figure}[!t]
   \begin{center}
	\input{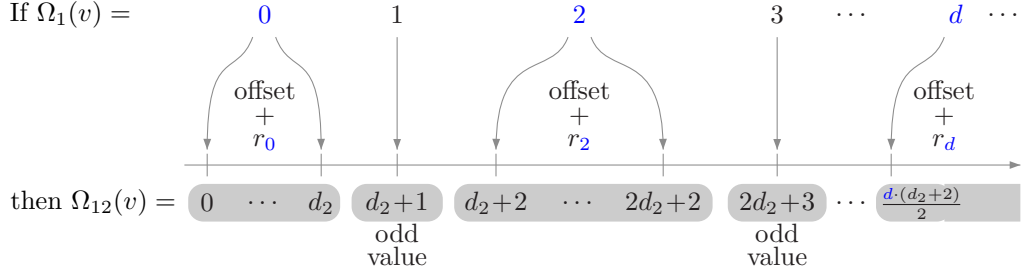}
   \end{center}
\caption{Construction of the priority function $\Omega_{12}$ for the conjunction of
the parity condition defined by the priority functions $\Omega_{1}$ and $\Omega_{2}$.
The offset for an even $\Omega_{1}$-priority \blue{$d$} is calculated as $\frac{\blue{d}}{2}\cdot (d_2 + 2)$.
The register $r_{\blue{d}}$ stores the largest $\Omega_{2}$-priority since the last
visit to an $\Omega_{1}$-priority \blue{$d$}.
 \label{fig:map}}
\end{figure}

Given a D2PW where the largest priority in the two parity conditions are $d_1$
and $d_2$, 
we construct an equivalent DPW where the largest priority is $O(d_1\cdot d_2)$.
We informally refer to priorities defined by a priority function $\Omega_{i}$ 
as $\Omega_{i}$-priorities. It will be convenient to assume that $d_2$ is even, 
which is without loss of generality (we may increment $d_2$ if $d_2$ is odd).

Intuitively, given a finite run $\tau = q_0 q_1 \dots q_k$ on some finite word $w$ in the D2PW,
where the priorities are $\Omega_1(\tau) = a_0 a_1 \dots a_k$ and $\Omega_2(\tau) = b_0 b_1 \dots b_k$, 
the run of our DPW on $w$ has priority sequence $c_0 \dots c_k$ with the following properties 
(see also \figurename~\ref{fig:map}): 

\begin{itemize}
\addtolength{\itemindent}{2mm}
\item[($P_1$)] \hspace{-2mm}if the $\Omega_1$-priority $a_k$ is odd, then $c_k$ is odd
and the value of $c_k$ depends only on $a_k$;

\item[($P_2$)]  \hspace{-2mm}for all runs $\tau'$ on some finite word $w'$ in the D2PW,
with $\Omega_1(\tau') = a'_0 \dots a'_k$, if $a_k < a'_k$ then
$c_k < c'_k$ (where $c'_k$ is the priority of the last state in the run of our DPW on $w'$); 
this is achieved by allocating disjoint segments where the value of $c_k$ may lie. 
The segment associated with $a_k$ has range $[d_2]$ if $a_k$ is even, 
and it is a single odd value if $a_k$ is odd;

\item[($P_3$)]  \hspace{-2mm}if the $\Omega_1$-priority $a_k$ is even, then the value of $c_k$
is calculated as the sum of an offset~$\nu(a_k)$ corresponding to the origin of the segment
associated with  $a_k$ and the largest $\Omega_{2}$-priority since the previous 
visit to the $\Omega_{1}$-priority $a_k$ (or since the beginning of the run
if $a_k$ is visited for the first time in $\tau$). That is, $c_k = \nu(a_k) + \max \{b_i, b_{i+1}, \dots, b_{k-1}\}$
where $i$ is the largest index smaller than $k$ such that $a_i = a_k$ (or, $i=0$).
\end{itemize}

In order to satisfy ($P_3$), the DPW automaton maintains, for each 
even $\Omega_1$-priority $d$,  a register $r_d$ that contains 
the largest $\Omega_{2}$-priority since the previous visit to 
a state with $\Omega_1$-priority $d$ (or since the beginning of the run).

Given a D2PW $\A=\tuple{Q,q_{\epsilon},\delta,\alpha}$ where the
acceptance condition $\alpha$ is a conjunction of parity conditions
defined by two priority functions $\Omega_{1}$ and $\Omega_{2}$,
let $\A'=\tuple{Q',q'_{\epsilon},\delta',\alpha'}$ be the DPW defined
as follows:
\begin{itemize}
\item $Q' = Q \times [d_2]^{[d_1]_{\mathrm{even}}}$ where $[d_1]_{\mathrm{even}} = \{0,2,\dots,2\lfloor\frac{d_1}{2}\rfloor\}$; a state $(q,r_0,r_2,\dots, r_{2\lfloor\frac{d_1}{2}\rfloor}) \in Q'$
consists of a state $q$ of $\A$ and, for each even $\Omega_1$-priority $d$,
a register $r_d$ 
with range $[d_2] = \{0,\dots,d_2\}$; 
\item $q'_{\epsilon} = (q_{\epsilon},0,0,\dots, 0)$;
\item for every state $(q,r_0,r_2,\dots, r_{2\lfloor\frac{d_1}{2}\rfloor}) \in Q'$ and $\sigma \in \Sigma$, define $\delta( (q,r_0,r_2,\dots, r_{2\lfloor\frac{d_1}{2}\rfloor}), \sigma) = (q',r'_0,r'_2,\dots, r'_{2\lfloor\frac{d_1}{2}\rfloor})$
where $q' = \delta(q,\sigma)$ and for every $d \in [d_1]_{\mathrm{even}}$:
$$r'_d =   \begin{cases}
    \max(r_d,\Omega_2(q))  & \text{if } \Omega_1(q) \neq d \\
    \Omega_2(q)  & \text{if } \Omega_1(q) = d 
  \end{cases}$$
\item the acceptance condition $\alpha'$ is a parity condition defined by the 
priority function $\Omega_{12}$ 

such that $\Omega_{12}(q,r_0,r_2,\dots, r_{2\lfloor\frac{d_1}{2}\rfloor}) = \begin{cases} 
    \frac{d\cdot(d_2 + 2)+d_2}{2}   & \text{if } d = \Omega_1(q) \text{ is odd} \\
    \frac{d\cdot(d_2 + 2)}{2} + r_d & \text{if } d = \Omega_1(q) \text{ is even}
  \end{cases}$

The value $\nu(d) = \frac{d\cdot(d_2 + 2)}{2}$ is called the \emph{offset} corresponding
to even $\Omega_1$-priority $d$ (see \figurename~\ref{fig:map}).
\end{itemize}

\begin{shortversion}
An example illustrating the construction is given in the extended version.    
\end{shortversion}

\begin{longversion}
\begin{example}
In Figure~\ref{fig:conj-parities}, on the left side, we have an input D2PW $\aut$ with states $Q$ and alphabet $\{a,b\}$.
To accept a word, it must visit $q_3$ infinitely often.
In the figure, above every state $q \in Q$, we write the priorities $\Omega_1(q)$ and $\Omega_2(q)$ in parentheses.
Thus $\Omega_1(q_0) = \Omega_1(q_1) = 0$, $\Omega_1(q_2) = 2$, $\Omega_1(q_3) = 1$ and $\Omega_2(q_0) = 5$, $\Omega_2(q_1) = 3$, $\Omega_2(q_2) = 1$, $\Omega_2(q_3) = 6$.

In the figure on the right, we show the constructed automaton $\aut'$ with states $Q'$.
Note that the only even priorities that appear on the first dimension for automaton $\aut$ are $0$ and $2$.
As stated above, the initial state of $\aut'$ is $q_0' = (q_0, 0, 0)$.
As the automaton $\aut$ reads a letter and moves to state $q_1$ from $q_0$, then the maximum priority seen on the second dimension since the last visit to a state $q$ with $\Omega_1(q) = 0$ is 5 and the maximum priority seen on the second dimension since the last visit to a state $q$ with $\Omega_1(q) = 2$ is also 5.
Hence $\aut'$ moves to a state $(q_1,5,5)$ from the initial state $(q_0,0,0)$.
The automaton $\aut$ then reads another letter to move to $q_2$ from $q_1$.
For $d=0$, we have that $q_1$ is the last state $q$ visited before the current visit to $q_2$ such that $\Omega_1(q) = 0$.
Since $\Omega_2(q_1) = 3$, we now move to the state $(q_2,3,5)$ in $\aut'$.
From $q_2$, if $\aut$ now reads a letter and moves to $q_0$, since $\Omega_1(q_2)=2$ and $\Omega_2(q_2)=1$, we move to the state $(q_0,3,1)$ in $\aut'$.
The other states in $\aut'$ can be defined similarly.

Now we describe the priority function $\Omega_{12}$.
Note that $d_2=6$.
Since $\Omega_1(q_0)=0$ which is even, we have that $\Omega_{12}(q_0,0,0) = \frac{0 \cdot 8}{2} + r_0 = 0+0=0$.
Again, since $\Omega_1(q_1)=0$, we have that $\Omega_{12}(q_1,5,5)=\frac{0 \cdot 8}{2} + r_0 = 0+5=5$.
For the state $(q_2,3,5)$, since $\Omega_1(q_2)=2$, we have that $\Omega_{12}(q_2,3,5)=\frac{2 \cdot 8}{2} + r_2 = 8+5=13$.
For the state $(q_3,3,1)$, since $\Omega_1(q_3)=1$ which is odd, we have that we have that $\Omega_{12}(q_3,3,1)=\frac{1 \cdot 8 + 6}{2} = 7$.
For the other states in $Q'$, the parity function $\Omega_{12}$ can be defined analogously.
The priority corresponding to each state in $\aut'$ appears above the state in the figure.
Further, for every transition from a state $(q, r_2, r_4)$ to a state $(q', r'_2, r'_4)$ in $\aut'$, we have a corresponding transition from state $q$ to $q'$ in $\aut$ and the letter labelling the transition in $\aut'$ is the same as the letter labelling the transition in $\aut$.
To avoid clutter, we did not add these letters on the transitions in the figure describing $\aut'$.
\end{example}

\begin{figure}
\begin{minipage}{.3\textwidth}
\begin{tikzpicture}[->, >=Stealth, node distance=2cm]

  \node[circle, draw, minimum size=6mm] (A) at (0,0) {$q_0$};
  \node[above=0mm of A] {(0,5)};
  \node[circle, draw, minimum size=6mm] (B) at (2,0) {$q_1$};
  \node[above=0mm of B] {(0,3)};
  \node[circle, draw, minimum size=6mm] (C) at (0,-2) {$q_2$};
  \node[above left=-1mm of C] {(2,1)};
  \node[circle, draw, minimum size=6mm] (D) at (2,-2) {$q_3$};
  \node[above=0mm of D] {(1,6)};
  \node at (-1,0) (Invis) {}; 

  \draw[->] (Invis) -- (A);           
  \draw[->] (A) -- (B) node[midway, above] {$a,b$};               
  \draw[->] (B) -- (C) node[midway, above] {$a,b$};;              
  \draw[->] (C) -- (A) node[midway, right] {$a$}; 
  \draw[->] (C) -- (D) node[midway, above] {$b$};               
  \draw[->] (D) to[bend left=20] node[below] {$a,b$} (C); 

\end{tikzpicture}
\end{minipage}
\begin{minipage}{.6\textwidth}
\begin{tikzpicture}[
  vertex/.style={circle, draw=black, fill=white, minimum size=20pt, inner sep=1pt, font=\footnotesize},
  addlabel/.style={
    matrix of nodes,
    nodes={text=blue, font=\scriptsize, anchor=center},
    column sep=2pt,
    row sep=1pt,
    inner sep=0pt,
    nodes in empty cells,
  },
  >={Stealth[round]},
  shorten >=1pt,
  shorten <=1pt,
]

\node at (-1.2,0) (Invis) {}; 
\node[vertex] (v1) at (0,0) {$q_0,0,0$};
\node[vertex] (v2) at (2,0) {$q_1,5,5$};
\node[vertex] (v3) at (4,0) {$q_2,3,5$};
\node[vertex] (v4) at (6,0) {$q_0,3,1$};

\node[vertex] (v5) at (0,-2) {$q_3,3,1$};
\node[vertex] (v6) at (2,-2) {$q_2,6,6$};
\node[vertex] (v7) at (4,-2) {$q_3,6,1$};
\node[vertex] (v8) at (6,-2) {$q_0,6,1$};

\node[above=0mm of v1] {0};
\node[above=0mm of v2] {5};
\node[above=0mm of v3] {13};
\node[above=0mm of v4] {3};

\node[above=0mm of v5] {7};
\node[above=0mm of v6, label={[xshift=1.0mm, yshift=-0.3cm]14}] {};
\node[above=1mm of v7, label={[xshift=-5.0mm, yshift=-0.5cm]7}] {};
\node[above=0mm of v8] {6};



\draw[->] (Invis) -- (v1); 
\draw[->, bend left=0] (v1) to (v2);
\draw[->, bend left=0] (v2) to (v3);
\draw[->, bend left=0] (v3) to (v4);
\draw[->, bend left=25] (v4) to (v2);

\draw[->, bend right=0] (v3) to (v5);
\draw[->, bend right=0] (v5) to (v6);
\draw[->, bend right=0] (v6) to (v7);
\draw[->, bend right=20] (v7) to (v6);

\draw[->, bend right=30] (v6) to (v8);
\draw[->, bend left=10] (v8) to (v2);

\end{tikzpicture}
\end{minipage}
    \caption{An example of a D2PW (left) and the DPW constructed (right)}
    \label{fig:conj-parities}
\end{figure}
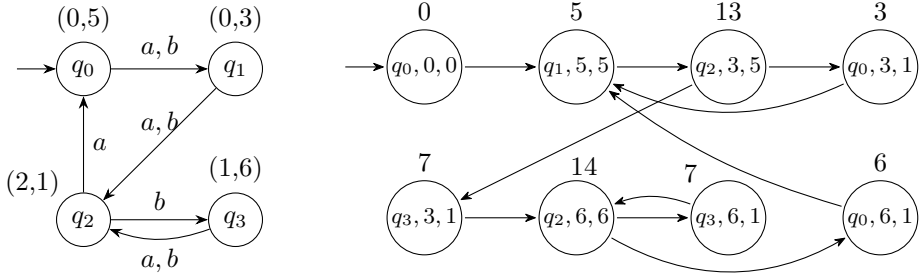
\end{longversion}

The number of priorities in $\A'$ is in $O(d_1 \cdot d_2)$ and the
blowup in the number of states is by a factor $d_2^{d_1/2}$. 
By exchanging the two priority functions in our construction, 
we obtain a blowup factor of $d_1^{d_2/2}$, and we can choose
the construction with the smaller of the two factors.

\begin{theorem}\label{theo:D2PW-DPW}
Given a D2PW $\A$, we can construct a DPW $\A'$ such that $L(\A) = L(\A')$,
the number of priorities in $\A'$ is $O(d_1 \cdot d_2)$,
and the blowup in the number of states (of $\A'$ as compared to $\A$) 
is by a factor $\sqrt{\min(d_1^{d_2},d_2^{d_1})}$.
\end{theorem}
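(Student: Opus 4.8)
The plan is to establish the language equivalence $L(\A) = L(\A')$ by analysing the $\limsup$ of the priority sequence produced by $\Omega_{12}$, and then to read off the two quantitative bounds by direct inspection. First I would verify that the formal definition of $\delta'$ and $\Omega_{12}$ indeed realises properties $(P_1)$--$(P_3)$; this is routine, the only arithmetic point being that for odd $d$ the value $\frac{d(d_2+2)+d_2}{2}$ is odd (since $d_2$ is even), and that the segments attached to consecutive $\Omega_1$-priorities are contiguous and strictly ordered, so that every value in the segment of a priority $d$ exceeds every value in the segment of any $d' < d$. Now fix a word $w$ with run $\tau = q_0 q_1 \dots$ in $\A$ and corresponding run $q'_0 q'_1 \dots$ in $\A'$, and set $m_1 = \limsup_i \Omega_1(q_i)$ and $m_2 = \limsup_i \Omega_2(q_i)$, so that $w \in L(\A)$ exactly when $m_1$ and $m_2$ are both even. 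The goal is to prove that $\limsup_i \Omega_{12}(q'_i)$ is even if and only if $m_1$ and $m_2$ are both even.

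The first key step is to locate $\limsup_i \Omega_{12}(q'_i)$. Since states with $\Omega_1$-priority strictly above $m_1$ occur only finitely often while states with $\Omega_1$-priority exactly $m_1$ occur infinitely often, and since by $(P_2)$ the $m_1$-segment dominates every lower segment, the largest $\Omega_{12}$-value hit infinitely often must lie in the segment associated with $m_1$; moreover this segment is entered precisely at the positions $i$ with $\Omega_1(q_i) = m_1$. The argument then splits on the parity of $m_1$. If $m_1$ is odd, its segment is the single odd value $\frac{m_1(d_2+2)+d_2}{2}$, so $\limsup_i \Omega_{12}(q'_i)$ is odd and $w \notin L(\A')$, matching $w \notin L(\A)$.

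If $m_1$ is even, then $\nu(m_1) = \frac{m_1(d_2+2)}{2}$ is even, and $\limsup_i \Omega_{12}(q'_i) = \nu(m_1) + R$, where $R$ is the largest register value $r_{m_1}$ occurring infinitely often at the positions $i$ with $\Omega_1(q_i) = m_1$. The crux is the identity $R = m_2$. Enumerate the $m_1$-positions as $i_0 < i_1 < \dots$; by $(P_3)$ the register read at $i_j$ equals $\max\{\Omega_2(q_{i_{j-1}}), \dots, \Omega_2(q_{i_j-1})\}$, i.e.\ the maximal $\Omega_2$-priority over the window between consecutive $m_1$-visits. These windows tile the tail $[i_0,\infty)$ of the run, so every $\Omega_2$-priority occurring infinitely often (in particular $m_2$) appears in infinitely many windows, giving $R \ge m_2$; and since eventually every visited $\Omega_2$-priority is $\le m_2$, all sufficiently late windows have maximum $\le m_2$, giving $R \le m_2$. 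Hence $\limsup_i \Omega_{12}(q'_i) = \nu(m_1) + m_2$ has the parity of $m_2$, so $w \in L(\A')$ iff $m_2$ is even, again matching $w \in L(\A)$. This completes $L(\A) = L(\A')$.

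Finally the two bounds follow by inspection. The largest value taken by $\Omega_{12}$ is at most $\nu(d_1) + d_2 = O(d_1 d_2)$, giving the stated priority count; and the number of register valuations is $|[d_2]^{[d_1]_{\mathrm{even}}}| = (d_2+1)^{\lfloor d_1/2\rfloor + 1}$, so the state blowup is $O(d_2^{d_1/2}) = O(\sqrt{d_2^{d_1}})$, while running the construction with the roles of $\Omega_1$ and $\Omega_2$ exchanged yields a factor $O(\sqrt{d_1^{d_2}})$; taking the smaller of the two gives $\sqrt{\min(d_1^{d_2}, d_2^{d_1})}$. I expect the main obstacle to be making the window/register identity $R = m_2$ fully rigorous --- specifically, justifying that the $m_1$-windows genuinely partition a tail of the run (which relies on $m_1$-states recurring infinitely often) and that the overall $\limsup$ is attained inside the $m_1$-segment and not distorted by the finitely many excursions into higher segments.
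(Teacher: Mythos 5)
Your proof is correct and follows essentially the same route as the paper's: locate the $\limsup$ of the $\Omega_{12}$-sequence in the segment of the dominant $\Omega_1$-priority, and use the window decomposition between consecutive visits to that priority to identify the recurring register value with the dominant $\Omega_2$-priority. The only difference is organizational --- you establish the single identity $\limsup_i \Omega_{12}(q'_i) = \nu(m_1) + m_2$ (for $m_1$ even) and read off both inclusions at once, whereas the paper proves $L(\A) \subseteq L(\A')$ and $L(\A') \subseteq L(\A)$ separately; the underlying arguments are the same.
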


\begin{longversion}
\begin{proof}
We first show that $L(\A) \subseteq L(\A')$.
Consider a word $w \in L(\A)$ and let $p_i$ ($i=1,2$) be 
the largest $\Omega_i$-priority visited infinitely often 
in the run of $\A$ on $w$. Then both $p_1$ and $p_2$ are even
and by construction of $\A'$, the largest priority visited infinitely often 
along the run of $w$ in $\A'$ is $p_{12} = \frac{p_1 \cdot (d_2+2)}{2} + p_2$ (see definition
of $\Omega_{12}$) because infinitely often the register $r_{p_1}$ will 
contain the $\Omega_2$-priority $p_2$, and from some point on it will never contain a larger value.
As we assumed that $d_2$ is even, we get that $p_{12}$ is even and $w \in L(\A')$.

Conversely, we show that $L(\A') \subseteq L(\A)$.
Consider a word $w \in L(\A')$ and let $p_{12}$ be the largest $\Omega_{12}$-priority visited infinitely often 
in the run of $\A'$ on $w$. Then $p_{12}$ is even and
it is easy to see that the largest $\Omega_1$-priority $p_1$ visited infinitely often 
in the run of $\A$ on $w$ is even: by contradiction, if it was odd, then the largest $\Omega_{12}$-priority
visited infinitely often would be odd as well by the properties ($P_1$) and ($P_2$).
The states with $\Omega_{12}$-priority $p_{12}$ in the run of $\A'$ correspond
to states with $\Omega_1$-priority $p_1$ in the run of $\A$, and 
$p_{12} = \frac{p_1 \cdot(d_2 + 2)}{2} + r_{p_1}$ (see definition of $\Omega_{12}$)
where $r_{p_1}$ is the content of the register for $\Omega_1$-priority $p_1$,
that is the largest $\Omega_2$-priority since the last visit to $\Omega_1$-priority $p_1$.

Hence by property ($P_3$), among the (infinitely many) visits to $p_1$ in the run of $\A$,
there are infinitely many positions where $r_{p_1}$ is $p_2 = p_{12} - \frac{p_1 \cdot(d_2 + 2)}{2}$, which is even (as we assumed that $d_2$ is even),
and from some point on, the value of $r_{p_1}$ is never greater.
It follows that, from that point on, the run of $\A$ can be decomposed (partitioned)
into finite segments whose largest $\Omega_2$-priority is at most $p_2$ and infinitely often 
equal to $p_2$. Hence the largest $\Omega_2$-priority visited infinitely often 
in the run of $\A$ on $w$ is $p_2$, which is even.
We conclude that the run of $\A$ on $w$ satisfies both parity conditions and $w \in L(\A)$.
\end{proof}
\end{longversion}

As a corollary, the transformation in Theorem~\ref{theo:D2PW-DPW} is polynomial
when either of the priority functions in the D2PW has a fixed (but arbitrary) range. In particular, it follows that the complexity of the synthesis problem for games with a conjunction of two sure (or almost-sure) parity conditions, which is coNP-complete in general,
becomes NP~$\cap$~coNP when either of the priority functions
defining the parity conditions has a fixed range,
as it reduces in polynomial time to a standard parity game.

Note that the same construction can be used to transform a disjunction of two
parity conditions into a single parity condition. 
This follows since $\psi_1 \cup \psi_2$ = $(\psi^c_1 \cap \psi^c_2)^c$ and because the complement of a parity condition defined by a priority function $\Omega$ is also a parity condition, defined by the priority function $\Omega^{+1}$.

\section{Winning Strategy and Winning Region}
We describe informally a sure-almost-sure winning strategy for Player~$1$.
We start by playing an almost-sure winning strategy for the conjunction
of parity objectives $\Parity(\Omega_1) \cap \Parity(\Omega_2)$. 
Conjunction of parity objectives can be expressed as a single parity objective
over an exponentially larger game graph (Theorem~\ref{theo:D2PW-DPW}) and therefore we can assume
that the almost-sure winning strategy uses finite memory.

To guarantee sure winning for the first objective,
the crux is then to avoid outcomes that violate $\Parity(\Omega_1)$ (which is 
possible, though with probability $0$) by switching from time to time to a strategy that ensures 
a visit to a large even priority for $\Omega_1$, ignoring the second objective thus 
possibly at the price of visiting a large odd priority for $\Omega_2$, 
then switching back to an almost-sure winning strategy for the conjunction of objectives.
The switching should be so rare that the probability of switching 
infinitely often, thereby potentially violating the second objective $\Parity(\Omega_2)$, would be $0$. 
On the other hand, the switching should still guarantee in all cases to eliminate 
the possibility of a single outcome violating $\Parity(\Omega_1)$.

Intuitively, a condition for switching is to have seen in the play a long sequence
of odd priorities, without seeing any larger priority.   
Given odd priority $p$ and integer $N$, let 
$$\Unlucky(p,N) = \{\rho \in V^{\omega}  \mid \text{the first } N \text{ elements of } \pi_{\Omega_1 \geq p}(\rho) \text{ are } p \}$$
be the event that at least $N$ priorities $\geq p$ occur and the first $N$ such priorities are $p$.\todo{Can we add an intuition as to why not consider ore than priority $p$ for the first $N$ elements?}

In the MDP obtained after fixing an almost-sure winning strategy for $\Parity(\Omega_1)$ in the game $\Game$,
it becomes unlikely to be unlucky (i.e., to observe $\Unlucky(p,N)$) as $N$ grows to infinity.

\begin{lemma}\label{lem-prob-unlucky}
Consider an MDP with a priority function 
$\Omega_1$ defining the parity objective $\psi = \Parity(\Omega_1)$,
such that $\Prob_{v}^{\sigma}(\psi) = 1$ for all vertices $v \in V$ and all strategies $\sigma$.  
Let $p$ be an odd priority in the range of $\Omega_1$.
Then for all $\epsilon > 0$, there exists $N_{\epsilon}$ such that for all strategies $\sigma$
and all vertices $v$, we have $\Prob_{v}^{\sigma}(\Unlucky(p,N_{\epsilon})) < \epsilon$.
\end{lemma}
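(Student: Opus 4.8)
The plan is to prove the statement by a uniform-bound argument on the probability of producing a long block of the odd priority~$p$ without ever escaping above it. First I would recall the key hypothesis: in this MDP, \emph{every} strategy from \emph{every} vertex wins $\Parity(\Omega_1)$ with probability~$1$. Since $p$ is odd, a play in which the largest priority seen infinitely often is $p$ is losing. By prefix-independence of parity, this remains true when we restrict attention to vertices in $V(\Omega_1, {\geq} p)$, the sub-MDP reachable while no priority strictly larger than~$p$ is ever visited. The event $\Unlucky(p,N)$ is precisely the event that the first $N$ elements of $\pi_{\Omega_1 \geq p}(\rho)$ all equal~$p$; so a play staying in $\Unlucky(p,N)$ for every $N$ (i.e.\ in $\bigcap_N \Unlucky(p,N)$) only ever visits priorities $\geq p$ equal to~$p$, never producing a larger priority, hence has $\limsup \Omega_1 = p$ (odd), violating $\psi$.

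\textbf{The contradiction to exploit.}
The central point I would establish is that $\sup_{\sigma, v} \Prob_v^{\sigma}(\bigcap_N \Unlucky(p,N)) = 0$. Indeed, $\bigcap_N \Unlucky(p,N) \subseteq \Parity^{\textsf{c}}(\Omega_1)$ as just argued, so for every fixed $\sigma$ and~$v$ this intersection has probability~$0$ by the almost-sure hypothesis. The main obstacle is to turn this \emph{pointwise} statement into a \emph{uniform} bound valid simultaneously over all (possibly infinite-memory) strategies and all starting vertices. The natural route is to observe that $\Unlucky(p,N)$ is a decreasing sequence of events, so $\Prob_v^{\sigma}(\Unlucky(p,N)) \to \Prob_v^{\sigma}(\bigcap_N \Unlucky(p,N)) = 0$ as $N \to \infty$, by continuity from above of the probability measure. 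But the rate of convergence a priori depends on~$\sigma$ and~$v$, which is exactly what we must eliminate.

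\textbf{Obtaining the uniform bound.}
To get uniformity I would pass to a finite, strategy-independent quantity. The idea is that staying inside $\Unlucky$ means confining the play to the region from which priorities $\leq p$ are reachable without exceeding~$p$, and repeatedly re-entering the priority-$p$ vertices. Because the almost-sure hypothesis holds against \emph{all} strategies, there is a vertex-uniform positive lower bound $\eta > 0$ on the probability, within any single "block" of bounded length $\leq \abs{V}$, of escaping the unlucky pattern (either by visiting a priority strictly larger than~$p$, or by leaving the sub-MDP where $p$ is the top priority). This follows from a pumping-style argument: if from some reachable configuration the play could remain unlucky with probability arbitrarily close to~$1$ for arbitrarily long, one could stitch these together into a strategy witnessing $\Prob^{\sigma}_v(\Parity^{\textsf{c}}(\Omega_1)) > 0$, contradicting the hypothesis. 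Since the MDP and its priorities are finite and the decision problem is independent of the exact transition probabilities only up to their supports, such an $\eta$ exists uniformly. Concretely, each additional priority-$p$ occurrence in $\pi_{\Omega_1 \geq p}$ survives the "not yet escaped" event with probability at most $1 - \eta$, so $\Prob_v^{\sigma}(\Unlucky(p,N)) \leq (1-\eta)^{\lfloor N / c\rfloor}$ for a constant~$c$ bounding the block length. Given $\epsilon > 0$, choosing $N_{\epsilon}$ large enough that $(1-\eta)^{\lfloor N_{\epsilon}/c\rfloor} < \epsilon$ yields the claim uniformly in $\sigma$ and~$v$, completing the proof.
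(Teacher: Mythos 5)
Your first two steps are sound and match the paper's starting point: $\bigcap_N \Unlucky(p,N) \subseteq \Parity^{\textsf{c}}(\Omega_1)$, so the tail event is null under every fixed strategy, and the whole difficulty is to turn this pointwise statement into a bound that is uniform over all (infinite-memory) strategies and vertices. The gap is in how you obtain that uniformity. Your claimed ``vertex-uniform $\eta>0$ for escaping within a block of length $\leq \abs{V}$'' is not justified, and in fact it cannot be justified by the reasons you give. First, the escape event is not well-defined on a time-bounded block: one of your two escape modes (``leaving the sub-MDP where $p$ is the top priority'') is a tail property that depends on the strategy's entire future, not on the next $\abs{V}$ steps, and consecutive occurrences of priority $\geq p$ can be separated by arbitrarily long excursions through lower priorities. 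Second, and more importantly, the correct form of your claim --- that $\sup_{\sigma,v}\Prob_v^{\sigma}(\Unlucky(p,c)) < 1$ for some finite $c$ --- is, via the submultiplicativity you implicitly use ($\Prob(\Unlucky(p,N+M)) \leq f(N)\cdot f(M)$ for $f(N)=\sup_{\sigma,v}\Prob_v^{\sigma}(\Unlucky(p,N))$), \emph{equivalent} to the lemma itself. So the sentence ``Since the MDP and its priorities are finite \dots such an $\eta$ exists uniformly'' assumes exactly what has to be proved: finiteness of the MDP does not give a uniform bound over an uncountable family of strategies, and the qualitative independence from transition probabilities is irrelevant here.

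The one place where you gesture at the real argument is the ``pumping-style / stitching'' remark, and that is precisely what the paper's proof consists of and what your write-up omits. The paper argues by contraposition: if for some $\epsilon>0$ and every $N$ there are $\sigma_N, v_N$ with $\Prob_{v_N}^{\sigma_N}(\Unlucky(p,N)) \geq \epsilon$, one fixes a vertex $v_\infty$ occurring for infinitely many $N$ and builds a single limit strategy $\sigma_\infty$ by a K\H{o}nig's-lemma-style diagonalization: at each history length $k$, infinitely many of the surviving $\sigma_N$ agree on the finite tuple of choices at length-$k$ histories (finite branching), and one keeps those. A continuity argument over the truncated events $\Unlucky^{\leq k}(p,N)$ then shows $\Prob_{v_\infty}^{\sigma_\infty}(\Unlucky(p,N)) \geq \epsilon$ for every $N$, hence $\Prob_{v_\infty}^{\sigma_\infty}(\bigcap_N \Unlucky(p,N)) \geq \epsilon > 0$, contradicting the almost-sure hypothesis. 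To repair your proof you would need to carry out this diagonalization (or an equivalent compactness argument on the strategy space) explicitly; once you have it, you get the lemma directly and the geometric-decay packaging becomes an optional corollary rather than a stepping stone.
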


\begin{longversion}
\begin{proof}
We prove the contrapositive. Assume that there exists $\epsilon > 0$ 
such that for all $N$ there exists a strategy $\sigma_N$ and vertex $v_N$
for which $\Prob_{v_N}^{\sigma_N}(\Unlucky(p,N)) \geq \epsilon$.

Consider a vertex $v_{\infty}$ such that $v_{\infty} = v_N$ for infinitely many 
values of $N$.  
We construct a strategy $\sigma_{\infty}$ such that for all $N$ we have 
$\Prob_{v_{\infty}}^{\sigma_{\infty}}(\Unlucky(p,N)) \geq \epsilon$,
which entails that $\Prob_{v_{\infty}}^{\sigma_{\infty}}(\Unlucky(p,\infty)) \geq \epsilon$,
and thus $\Prob_{v_{\infty}}^{\sigma_{\infty}}(\psi) \leq 1-\epsilon < 1$,
which concludes the proof of the contrapositive.

Define $\sigma_{\infty}$ inductively. Let $\sigma_{\infty}(v_{\infty}) = v_0$
where $v_0$ is such that $\sigma_N(v_{\infty}) = v_0$ for infinitely many values of $N$.  
Consider the set $\Sigma$ of all strategies $\sigma_N$ that agree with $\sigma_{\infty}$
on the value  
at $v_{\infty}$, that is $\sigma_N(v_{\infty}) = \sigma_{\infty}(v_{\infty})$. This set is infinite. 
The inductive construction is as follows: assume that we have constructed an infinite set $\Sigma_k \subseteq \Sigma$ of strategies\todo{Add that $\Sigma_k$ of $\sigma_N$ strategies?} that agree
with $\sigma_{\infty}$ at all histories of length at most $k$, and define the
value of $\sigma_{\infty}$ at all histories of length $k+1$ and an
infinite set $\Sigma_{k+1} \subseteq \Sigma_k$ of \todo{Add that $\Sigma_k$ of $\sigma_N$ strategies?} strategies that agree
with $\sigma_{\infty}$ for\todo{at or for?} all histories of length at most $k+1$. 

For each strategy $\sigma \in \Sigma_k$, consider the tuple of values of 
$\sigma$ at each history of length $k+1$. Note that this is a finite tuple since 
the MDP is finitely-branching. Hence there exists an infinite set 
$\Sigma_{k+1} \subseteq \Sigma_k$ of strategies that agree on the 
values in that tuple. We use this tuple to define the value of $\sigma_{\infty}$ at 
the histories of length $k+1$. 

This construction ensures that $\sigma_{\infty}$ agrees with some strategy $\sigma_k$
over all prefixes of length $k$, for arbitrarily large $k$. 
Let $\Unlucky^{\leq k} (p,N) = \{\rho \in V^{\omega}  \mid \text{the first } N \text{ elements of } \pi_{\Omega_1 \geq p}(\rho \lhd k) \text{ are } p \}$,
where $\tau(\lhd k)$ is the prefix of length $k$ of $\rho$, be the event that at least $N$ priorities $\geq p$ occur within the 
first $k$ vertices, and the first $N$ such priorities are $p$. 
The (non-decreasing) sequence $\Prob_{v_{\infty}}^{\sigma_{\infty}}(\Unlucky^{\leq k}(p,N))$ for $k=1,2,\dots$
converges to $\Prob_{v_{\infty}}^{\sigma_{\infty}}(\Unlucky(p,N))$,
which is also equal to $\lim_{k \to \infty} \Prob_{v_{\infty}}^{\sigma_{k}}(\Unlucky^{\leq k}(p,N)) \geq \epsilon$
since all terms in the sequence are at least $\epsilon$.
Hence $\Prob_{v_{\infty}}^{\sigma_{\infty}}(\Unlucky(p,N)) \geq \epsilon$ as required.
\end{proof}
\end{longversion}

Intuitively, since $\Unlucky(p,N)$ happens with positive probability, and
if Player $1$ switches from an almost-sure winning strategy whenever the event $\Unlucky(p,N)$ happens, using Borel-Cantelli lemma~\cite{Durrett2010}, Player $1$ will then switch (even infinitely often) with probability~$1$. 
This holds for all fixed $N$ even though the probability to be unlucky can be made arbitrarily small by choosing $N$ appropriately (according to Lemma~\ref{lem-prob-unlucky}).
It is undesirable to switch infinitely often, as this may cause the
violation of the second parity objective. 
However, if we increase $N$ upon seeing $N$ priorities $\geq p$ (whether unlucky or not), 
the probability to switch infinitely often may be reduced to~$0$, which is acceptable.

For all $i > 0$, let $N_i = N_{\epsilon}$ where  $N_{\epsilon}$ is defined by Lemma~\ref{lem-prob-unlucky} for $\epsilon = 2^{-i}$.
Using this sequence $(N_i)$, we show that the probability to never be unlucky from some round $i$ can be (lower) bounded by 
the expression $\prod_{j \geq i} \left(1-2^{-j}\right)$.
Simple calculations show that this expression tends to $1$ as $i \to \infty$ (e.g., see the proof of~\cite[Lemma~12]{BRR17}).

\subsection{Characterisation of the winning region}

\begin{figure}[!t]
   \begin{minipage}[c]{.52\linewidth}
	   \begin{center}
		\input{figures/even.tex}
	   \end{center}
	   \caption{Decomposition of a winning region with an even largest $\Omega_{1}$-priority $d$, into the sure attractor $A$ (for Player~$1$) to $d$
	   and the subgame closure of $B = V \setminus A$. \label{fig:even}}
   \end{minipage} \hfill
   \begin{minipage}[c]{.46\linewidth}
	   \begin{center}
	      \input{figures/odd.tex}
	   \end{center}
	  \caption{Decomposition of a winning region with an odd largest $\Omega_{1}$-priority $d$, into the positive attractor $A$ (for Player~$2$) to $d$
	   and the subgame induced by $B = V \setminus A$. \label{fig:odd}}
   \end{minipage}
\end{figure}

We present two lemmas giving characterisations of 
(1) the sure-almost-sure winning region~$W$ (for Player~$1$) when its largest $\Omega_1$-priority is even, and 
(2) the sure-almost-sure spoiling region~$V \setminus W$ (for Player~$2$) when its largest $\Omega_1$-priority is odd.
These are the base cases required to establish the correctness of the recursive algorithm that we present in Section~\ref{sec:alg}.
The symmetric cases, where the largest $\Omega_1$-priority is odd in the winning region
or is even in the spoiling region, are solved by recursive calls on subgames with a
smaller number of $\Omega_1$-priorities.

In Lemma~\ref{lem:char-even}, we show that the sure-almost-sure winning region~$W$,
which is a trap for Player~$2$ (inducing a subgame in which all vertices are winning),
is characterised, when its largest $\Omega_1$-priority $d$ is even, by the existence of (see also \figurename~\ref{fig:even}):
\begin{itemize}
\item an almost-sure winning strategy $\sigma_{AS}$ for Player~$1$, for objective $\Parity(\Omega_1) \cap \Parity(\Omega_2)$;
\item a sure-almost-sure winning strategy $\sigma_{sub}$ for Player~$1$
for the parity objectives $\Parity(\Omega_1)$ and $\Parity(\Omega_2)$ in the subgame closure $\closure{\Game}_{B}$,
where $B = V \setminus A$ and $A = \SureAttr_1(V(\Omega_1,d))$.
\end{itemize}

It is easy to show that these conditions are necessary for Player~$1$ to win in $W$,
and the crux of Lemma~\ref{lem:char-even} is to show that they
are sufficient as well, by constructing a sure-almost-sure winning strategy $\sigma_1$
from $\sigma_{AS}$, $\sigma_{sub}$, and the attractor strategy $\sigma_{Attr}$ in $A$.
We describe $\sigma_1$ informally when the sure winning objective is a B\"uchi
condition, namely the priorities are $\{1,2\}$: 
a sure-almost-sure winning strategy first plays like $\sigma_{AS}$,
and sets a horizon $N$ by which priority~$2$ should be visited at least once.
The larger is $N$, the more likely this would happen (Lemma~\ref{lem-prob-unlucky}). 
Nevertheless, in order to ensure the stronger objective of sure winning for $\Parity(\Omega_1)$,
the strategy $\sigma_1$ deviates from $\sigma_{AS}$ in the unlucky event that 
priority~$2$ is not visited within horizon $N$: as long as the history remains in $B$,
we follow the sure-almost-sure winning strategy $\sigma_{sub}$, and whenever 
the history enters $A$ we follow the attractor strategy $\sigma_{Attr}$ until
priority~$2$ is visited and then continue with $\sigma_{AS}$. The attractor strategy 
may endanger the (almost-sure) satisfaction of the second parity objective,
for instance by visiting a large odd $\Omega_2$-priority.
We ensure that the probability that this happens infinitely often is~$0$
by increasing the horizon $N$ as described after Lemma~\ref{lem-prob-unlucky}.
The generalization of this construction to a sure parity objective is
technical, requiring to track each $\Omega_1$-priority, but conceptually analogous.


\begin{lemma}\label{lem:char-even}
Let $\Game$ be a stochastic game with sure-almost-sure parity objective induced
by the priority functions $\Omega_1,\Omega_2$, such that the largest $\Omega_1$-priority $d$
in $\Game$ is even. Let $A = \SureAttr_1(V(\Omega_1,d))$ and let $B = V \setminus A$.

All vertices 
in $\Game$ are sure-almost-sure winning for Player~$1$ for the parity objectives 
$\Parity(\Omega_1)$ and $\Parity(\Omega_2)$ if and only if:

\begin{itemize}
\item all vertices in $\Game$ are almost-sure winning for Player~$1$ for the objective
$\Parity(\Omega_1) \cap \Parity(\Omega_2)$, and
 
\item  all vertices in $\closure{\Game}_{B}$ are sure-almost-sure winning for Player~$1$ for the parity objectives 
$\Parity(\Omega_1)$ and $\Parity(\Omega_2)$.
\end{itemize}
\end{lemma}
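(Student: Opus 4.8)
The plan is to prove both implications, treating the left-to-right direction (necessity) as routine and concentrating on the construction for the right-to-left direction (sufficiency), which is the crux.

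For necessity, suppose every vertex of $\Game$ is sure-almost-sure winning, witnessed by a strategy $\sigma_1$. The first condition is immediate: $\sigma_1$ is sure winning for $\Parity(\Omega_1)$, hence almost-sure winning for it, and almost-sure winning for $\Parity(\Omega_2)$; since a strategy almost-sure winning for two objectives is almost-sure winning for their intersection (Preliminaries), $\sigma_1$ witnesses almost-sure winning of $\Parity(\Omega_1)\cap\Parity(\Omega_2)$ everywhere. For the second condition I would project $\sigma_1$ onto $\closure{\Game}_{B}$: from $v\in B$, play according to $\sigma_1$ as long as the history stays in $B$, and follow the self-loop once $v_{\sf sink}$ is reached. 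Since $B=V\setminus\SureAttr_1(V(\Omega_1,d))$ is a trap for Player~$1$ and the closure only removes Player-$2$ exit moves and redirects boundary probabilistic mass to $v_{\sf sink}$, every outcome of the projected strategy either reaches $v_{\sf sink}$ (winning, as $\Omega_1(v_{\sf sink})=\Omega_2(v_{\sf sink})=0$) or stays in $B$ and coincides with an outcome of $\sigma_1$ in $\Game$. A standard coupling of the two measures (they agree on transitions internal to $B$) transfers sure winning for $\Parity(\Omega_1)$ and, because the $\Omega_2$-violating plays that stay in $B$ already have measure $0$ under $\sigma_1$, almost-sure winning for $\Parity(\Omega_2)$.

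For sufficiency I would combine the three witnessing strategies: a finite-memory (by Theorem~\ref{theo:D2PW-DPW}) almost-sure winning strategy $\sigma_{AS}$ for $\Parity(\Omega_1)\cap\Parity(\Omega_2)$, a sure-almost-sure winning strategy $\sigma_{sub}$ in $\closure{\Game}_{B}$, and the memoryless attractor strategy $\sigma_{Attr}$ that forces a priority-$d$ visit from anywhere in $A=\SureAttr_1(V(\Omega_1,d))$ within $\abs{V}$ steps. The combined strategy $\sigma_1$ alternates a main mode and a correction mode with a growing horizon. In main mode it (re-)initialises and plays $\sigma_{AS}$, while monitoring, for every odd priority $p$, the event $\Unlucky(p,N_i)$ measured from the start of the current round; the horizons $N_i$ are chosen via Lemma~\ref{lem-prob-unlucky} so that, in the MDP obtained by fixing $\sigma_{AS}$ (where every strategy almost-surely satisfies $\Parity(\Omega_1)$), each $\Unlucky(p,N_i)$ has probability below $2^{-i}$ uniformly over states. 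As soon as some $\Unlucky(p,N_i)$ fires, $\sigma_1$ enters correction mode: it plays $\sigma_{sub}$ while the history is in $B$ and $\sigma_{Attr}$ while it is in $A$, leaving correction mode only upon a priority-$d$ visit, after which it returns to main mode with horizon $N_{i+1}$ and a freshly initialised $\sigma_{AS}$.

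I would establish sure winning for $\Parity(\Omega_1)$ by contradiction: if some outcome had odd $\limsup$, say $p$, then past some point $T$ all priorities are $\le p<d$, so $\sigma_{Attr}$ can never be active in $A$ after $T$ (it would force priority $d$). After $T$ the counter for $p$ can no longer be reset and grows unboundedly, so a correction is eventually triggered at a vertex of $B$; since no priority-$d$ visit can occur after $T$, that correction never exits, the play stays in $B$, and it is an outcome of $\sigma_{sub}$ in $\closure{\Game}_{B}$ that never reaches $v_{\sf sink}$, hence satisfies $\Parity(\Omega_1)$ surely --- contradicting the assumption. For almost-sure $\Parity(\Omega_2)$ I would classify outcomes by the number of corrections: the per-round failure probability is bounded by $(\#\,\text{odd priorities})\cdot 2^{-i}$, so Borel--Cantelli yields that almost surely only finitely many corrections occur, and on such outcomes the tail is eventually either a $\sigma_{sub}$-play trapped in $B$ or a freshly started $\sigma_{AS}$-play, both of which almost-surely satisfy $\Parity(\Omega_2)$, lifted to the whole play by prefix-independence. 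The main obstacle is exactly this schedule design: making the single attractor-to-$d$ correction simultaneously neutralise every odd priority for the sure objective (forcing separate counters per odd priority) while keeping switches so rare that switching infinitely often has probability $0$, and reconciling the $\sigma_{AS}$ re-initialisations with both the uniform bound of Lemma~\ref{lem-prob-unlucky} and the prefix-independent tail argument for $\Parity(\Omega_2)$.
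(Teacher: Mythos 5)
Your overall architecture is the same as the paper's: the necessity direction is handled by restriction to $\closure{\Game}_{B}$ (correct, and essentially the paper's argument), and for sufficiency you combine $\sigma_{AS}$, $\sigma_{sub}$ and the attractor strategy, switching to a correction phase upon an unlucky event and returning upon a visit to priority $d$, with horizons $N_i$ growing so that Borel--Cantelli kills the probability of infinitely many switches. That skeleton, and your probability argument for almost-sure $\Parity(\Omega_2)$, are sound. The gap is in the trigger schedule, and it breaks sure winning for $\Parity(\Omega_1)$. You monitor, for each odd $p$, the single event $\Unlucky(p,N_i)$ \emph{anchored at the start of the current round}, and a new round (with a re-anchored window) begins only after a correction completes. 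The event $\Unlucky(p,N_i)$ is a one-shot test: it is determined by the first $N_i$ elements of $\pi_{\Omega_1\geq p}$ after the anchor, and if even one of those elements exceeds $p$, the event can never fire for that round. So consider an outcome of $\sigma_{AS}$ in which, shortly after the round starts, one vertex of priority $>p$ occurs, and thereafter the $\limsup$ is the odd priority $p$. No $\Unlucky(q,N_i)$ ever fires, the round never ends, the play follows $\sigma_{AS}$ forever, and the outcome violates $\Parity(\Omega_1)$ --- such outcomes are exactly the measure-zero violations the lemma must eliminate. Your step ``after $T$ the counter for $p$ can no longer be reset and grows unboundedly, so a correction is eventually triggered'' presupposes a rolling or resetting counter that your design does not have; and the naive fix of resetting the window for $p$ whenever a priority $>p$ is seen destroys the probability bound, because Lemma~\ref{lem-prob-unlucky} only bounds the probability that the \emph{first} $N$ elements are all $p$, not that \emph{some} block of $N$ consecutive elements is.

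The paper resolves exactly this tension with per-priority bookkeeping: for each odd $p$ it keeps a pointer $k_p$ and an index $i_p$, and \emph{every} time $N_{i_p}$ priorities $\geq p$ have been observed since position $k_p$ --- whether or not they were all $p$ --- it tests for unluckiness, then advances $k_p$ to the current position and increments $i_p$. This tiles the whole play, for each $p$, into consecutive disjoint windows, each of which is a fresh ``first-$N$'' test to which Lemma~\ref{lem-prob-unlucky} applies (giving the bound $\prod_{j\geq i}(1-2^{-j})$ on never being unlucky after round $i$), while guaranteeing that if from some point on the only priority $\geq p$ that occurs is $p$ itself, and it occurs infinitely often, then some window lies entirely beyond that point and consists only of $p$'s, forcing the switch. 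You need this (or an equivalent schedule whose windows both tile the play and remain individually ``first-$N$'' tests) to make the sure-winning half of your argument go through; the rest of your proof can then stand as written.
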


\begin{proof}
We first show that the conditions in the lemma are sufficient to ensure that 
all vertices in $\Game$ are sure-almost-sure winning for Player~$1$.

Let $\sigma_{AS}$ be an almost-sure winning strategy for Player~$1$ for the objective
$\Parity(\Omega_1) \cap \Parity(\Omega_2)$, let $\sigma_{Attr}$ be the sure-attractor
strategy to $V(\Omega_1,d)$, and $\sigma_{sub}$ be sure-almost-sure winning strategy for Player~$1$
for the parity objectives $\Parity(\Omega_1)$ and $\Parity(\Omega_2)$ in $\closure{\Game}_{B}$.

We define a strategy $\sigma_1$ for Player~$1$, and then show that it is
sure-almost-sure winning. 
The strategy uses a Boolean flag $\unlucky$ and integer variables $k_p, i_p$ for each odd priority $p$ occurring in $\Game$.
The variable $k_p$ is a pointer to a position in the history, and $i_p$
is the phase number associated with priority $p$. 
Initially $\unlucky = \bot$ (false) and $k_p = i_p = 0$ for all odd $p$.

Given a history $\tau \in V^* V_1$, the value $\sigma_1(\tau)$ and the update of the variables
are defined by executing the following instructions sequentially, and terminating
when an instruction \emph{play} is executed (where the sequence $(N_i)$ was defined after Lemma~\ref{lem-prob-unlucky}):
\todo{add informal description later}

\begin{enumerate}
\item if $\Omega_1(\Last(\tau)) = d$, then set $\unlucky = \bot$ and play $\sigma_{AS}(\tau)$; \label{stra-d}

\item if $\unlucky = \bot$, 
\begin{enumerate}
\item if $\abs{\pi_{\Omega_1\geq p}(\tau(k_p \rhd))} < N_{i_p}$ for all odd $p$, then play $\sigma_{AS}(\tau)$;

\item else, for each odd $p$ such that $\abs{\pi_{\Omega_1\geq p}(\tau(k_p \rhd))} \geq N_{i_p}$,
\begin{enumerate}
\item if the first $N_{i_p}$ elements of $\pi_{\Omega_1 \geq p}(\tau(k_p \rhd))$ are all $p$'s, then set $\unlucky = \top$; \label{stra-unlucky}
\item increment $i_p$, and set $k_p = \abs{\tau}$; 
\end{enumerate}

\end{enumerate}

\item if $\unlucky = \bot$, then play $\sigma_{AS}(\tau)$;

\item if $\unlucky = \top$, \label{stra-if-unlucky}
\begin{enumerate}
\item if $\Last(\tau) \in B$, then play $\sigma_{sub}(\tau')$ \label{stra-sub}
where $\tau'$ is the longest suffix of $\tau$ that is entirely in $B$;
\item else $\Last(\tau) \in A$, and play $\sigma_{Attr}(\tau)$. \label{stra-attr}
\end{enumerate}

\end{enumerate}

Note that a play of $\Game$ that remains in $B$ is also a play in $\closure{\Game}_{B}$, hence using 
the strategy $\sigma_{sub}$ in Condition~\ref{stra-sub} is well-defined.
\begin{shortversion}
We show in the appendix that $\sigma_1$ is sure-almost-sure winning.
\end{shortversion}

\begin{longversion}
We show that $\sigma_1$ is sure-almost-sure winning.
First show that $\sigma_1$ is sure winning for objective $\Parity(\Omega_1)$.
Consider an arbitrary outcome $\rho$ of $\sigma_1$, let $p_{\max}$ be the largest 
priority (according to $\Omega_1$) occurring infinitely often in $\rho$,
and consider a position $\hat{k}$ in $\rho$ after which no priority larger than $p_{\max}$ occurs.
We show that $p_{\max}$ is even. We proceed by contradiction, assuming that $p_{\max}$ is odd (hence $p_{\max} < d$).
Then along $\rho$, the variable $\unlucky$ must be infinitely often equal to $\top$, since condition~\ref{stra-unlucky} holds
infinitely often (after position $\hat{k}$, that is, when $k_{p_{\max}} > \hat{k}$).
Condition~\ref{stra-d} never holds after position $\hat{k}$,   
Condition~\ref{stra-if-unlucky} holds infinitely often. After position $\hat{k}$,
Condition \ref{stra-attr} does not hold, because otherwise priority $d > p_{\max}$
would occur. Hence Condition \ref{stra-sub} always holds after position $\hat{k}$
and thus the suffix of $\rho$ is an outcome of the sure-almost-sure winning
strategy $\sigma_{sub}(\tau')$, hence $p_{\max}$ is even and not odd,
which establishes the contradiction. Hence  $p_{\max}$ is even and $\sigma_1$ 
is sure winning for objective $\Parity(\Omega_1)$.

Second, we show that $\sigma_1$ is almost-sure winning for objective $\Parity(\Omega_2)$.
Note that, under the event that always $\unlucky = \bot$ from some point on,
the strategy $\sigma_1$ plays like $\sigma_{AS}$, and thus the parity objective 
$\Parity(\Omega_2)$ holds with probability~$1$ thanks to prefix independence
of parity objectives. To complete the proof, we show that the event that $\unlucky = \bot$ always holds
from some point on has probability~$1$. By the choice of $N_i$ (see Lemma~\ref{lem-prob-unlucky}),
for all strategies of Player~$2$
the probability that $\unlucky = \bot$ always holds after round $i$ is at least 
$\prod_{j \geq i} \left(1-2^{-j}\right)$, which tends to $1$ as $i \to \infty$ (e.g., see the proof of~\cite[Lemma~12]{BRR17}).
\end{longversion} 

We now prove the converse. Consider a game $\Game$ where
all vertices are sure-almost-sure winning for Player~$1$ for the parity objectives 
$\Parity(\Omega_1)$ and $\Parity(\Omega_2)$. First, since sure winning implies almost-sure
winning, the first condition holds. Second, a sure-almost-sure winning strategy for Player~$1$ 
can be played in $\closure{\Game}_{B}$, and either the play always remains in $B$, 
or it eventually stays in the winning vertex $v_{\sf sink}$. 
In both case, Player~$1$ is sure-almost-sure winning for the parity objectives, 
showing that the second condition holds.
\end{proof}

The sure-almost-sure spoiling region~$V \setminus W$ may not induce a subgame
in general, since probabilistic vertices may `leak' in the winning region. 
Lemma~\ref{lem:char-odd} gives a characterisation of the spoiling region
in the case that it induces a subgame, which is sufficient for proving the
correctness of our algorithm. 

If Player~$2$ has a spoiling strategy (from every initial vertex) in the game 
obtained from a game $\Game$ with an odd largest $\Omega_{1}$-priority $d$ by removing the 
positive attractor (for Player~$2$) to $d$\todo{Some rephrasing might be better: obtained by removing ... from a game $\Game$ \dots}, then he has a spoiling strategy in the original game $\Game$ as well (see also \figurename~\ref{fig:odd}).
The standard composition a spoiling strategy in the subgame with a (positive) attractor strategy gives a spoiling strategy in the original game, as shown in the proof of Lemma~\ref{lem:char-odd}.

\begin{lemma}\label{lem:char-odd}
Let $\Game$ be a stochastic game with sure-almost-sure parity objective induced
by the priority functions $\Omega_1,\Omega_2$, such that the largest $\Omega_1$-priority $d$
in $\Game$ is odd. Let $A = \PosAttr_2^\Game(V(\Omega_1,d))$ and let $B = V \setminus A$.

The sure-almost-sure winning region for Player~$1$ in $\Game$ for the parity objectives 
$\Parity(\Omega_1)$ and $\Parity(\Omega_2)$ is empty if and only if
the sure-almost-sure winning region for Player~$1$ in $\Game_{\upharpoonright B}$ for the same 
objectives is empty.
\end{lemma}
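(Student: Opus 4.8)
The plan is to prove the two implications separately, using that the game is determined and that memoryless spoiling strategies suffice for \PT~\cite{CP19}, so that ``the winning region is empty'' is equivalent to ``\PT has a spoiling strategy from every vertex''. First I record the structural facts I will lean on: since $B = V \setminus \PosAttr_2(V(\Omega_1,d))$, the set $B$ is a trap for \PT and therefore induces the genuine subgame $\subGame{B}$; moreover $V(\Omega_1,d) \subseteq A$, so no vertex of $B$ has $\Omega_1$-priority $d$ and the largest $\Omega_1$-priority in $\subGame{B}$ is strictly below $d$. For the ``only if'' direction (equivalently, its contrapositive: if the winning region $W_B$ in $\subGame{B}$ is nonempty then the winning region $W$ in $\Game$ is nonempty) I would simply transfer a sure-almost-sure winning strategy from some $w \in B$ in $\subGame{B}$ to $\Game$ via the trap property: because $B$ traps \PT, once \PO plays this strategy the play can never leave $B$ whatever \PT does, so both $\Outcome(w,\cdot)$ and the induced probability measure in $\Game$ coincide with those in $\subGame{B}$; hence sure winning for $\Parity(\Omega_1)$ and almost-sure winning for $\Parity(\Omega_2)$ are both preserved, giving $w \in W$.

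The ``if'' direction ($W_B = \emptyset \Rightarrow W = \emptyset$) is the composition of a spoiling strategy with a positive attractor. Let $\tau_B$ be a memoryless spoiling strategy for \PT in $\subGame{B}$ (which exists from every vertex since $W_B = \emptyset$), let $T = V(\Omega_1,d)$, and let $\sigma_{attr}$ be the memoryless positive-attractor strategy to $T$ on $A$. I would define the memoryless strategy $\tau$ that plays $\sigma_{attr}$ on $A$ and $\tau_B$ on $B$, and show it spoils every vertex. The key attractor fact is that from any $u \in A$, against any \PO strategy $\sigma$, the play reaches $T$ within $|V|$ steps while staying in $A$ with probability at least a uniform $\eta > 0$: \PO-vertices in $A$ have all successors in $A$ at strictly smaller attractor rank, \PT-vertices decrease the rank under $\sigma_{attr}$, and probabilistic vertices decrease it with probability bounded below.

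Fixing $\sigma$ and an initial vertex $v$, I would split on the behaviour of the play. If $\Prob^{\sigma,\tau}_v(\text{the play visits } A \text{ infinitely often}) > 0$, then by the attractor fact together with the conditional Borel--Cantelli lemma~\cite{Durrett2010}, conditioned on visiting $A$ infinitely often $T$ is visited infinitely often almost surely; hence with positive probability $\limsup \Omega_1 = d$, which is odd. As a positive-probability event is nonempty, there is an \emph{outcome} of $(\sigma,\tau)$ violating $\Parity(\Omega_1)$, refuting sure winning. Otherwise the play almost surely eventually stays in $B$; if some outcome still violates $\Parity(\Omega_1)$ we are done, so assume all outcomes satisfy $\Parity(\Omega_1)$ and it remains to show $\Prob^{\sigma,\tau}_v(\Parity(\Omega_2)) < 1$. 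Here I would condition on a positive-probability history $h_0$ ending at some $w \in B$ after which the play remains in $B$ forever (obtained by a last-exit decomposition, valid since the play a.s.\ eventually stays in $B$). On this event the tail is a play of $\subGame{B}$ from $w$ driven by $\tau_B$ and by the continuation of $\sigma$, which a.s.\ stays in $B$ and thus induces a \PO-strategy $\sigma'$ in $\subGame{B}$ from $w$. Since $\tau_B$ spoils $w$, either some outcome of $(\sigma',\tau_B)$ violates $\Parity(\Omega_1)$, or $\Prob^{\sigma',\tau_B}_w(\Parity(\Omega_2)) < 1$; in the latter case the failure has positive probability on the stay-in-$B$ branch and pulls back through $h_0$ to give $\Prob^{\sigma,\tau}_v(\Parity(\Omega_2)) < 1$.

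The step I expect to be the main obstacle is the completion of this last case, namely reconciling the \emph{sure} mode of $\Parity(\Omega_1)$ with the \emph{almost-sure} mode of $\Parity(\Omega_2)$ across the composition. The induced strategy $\sigma'$ must be defined on the (conditional-probability-zero) histories where $\sigma$ would leave $B$, and for the $\Parity(\Omega_1)$-branch of spoiling to be meaningful one must ensure these measure-zero continuations do not manufacture spurious $\Omega_1$-violations that have no counterpart in $(\sigma,\tau)$. I expect the delicate point to be this measure-zero bookkeeping: choosing $h_0$ so that escapes from $B$ occur only with conditional probability $0$, and then arguing that such branches can be neglected or routed into a sure-$\Omega_1$ safe continuation. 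By contrast, the attractor Borel--Cantelli argument of the first case and the probabilistic $\Parity(\Omega_2)$ pullback are routine.
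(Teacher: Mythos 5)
Your overall architecture matches the paper's: the easy direction is the trap-transfer argument you give, and the hard direction composes a positive-attractor strategy on $A$ with spoiling strategies in $\subGame{B}$. Your first case (conditional Borel--Cantelli to get an outcome with $\limsup\Omega_1 = d$ when $A$ is visited infinitely often with positive probability) is fine, and in fact more carefully argued than the paper's one-line assertion. The genuine gap is in your second case, and it is exactly where you predicted trouble, but it is worse than ``measure-zero bookkeeping''. A last-exit decomposition only gives you a positive-probability history $h_0$ with $\Prob^{\sigma,\tau}_v(\text{stay in } B \text{ forever} \mid h_0) > 0$; it does \emph{not} give conditional probability $1$, and certainly not a history after which the play \emph{surely} remains in $B$. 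Escapes from $B$ after $h_0$ need not be null events: $B = V \setminus \PosAttr_2(V(\Omega_1,d))$ traps \PT and the probabilistic vertices, but a \PO-vertex of $B$ may have successors in $A$, and if the continuation of $\sigma$ deterministically picks one, the play leaves $B$ with conditional probability $1$ given that history. Consequently the ``induced strategy $\sigma'$ in $\subGame{B}$'' is not the restriction of $\sigma$ but an arbitrary completion on such histories, and the spoiling dichotomy for $(\sigma',\tau_B)$ does not transfer: in particular, if $\tau_B$ spoils $\sigma'$ by exhibiting an outcome of $\subGame{B}$ violating $\Parity(\Omega_1)$, that outcome may pass through a history where $\sigma'$ was invented, and then it corresponds to no outcome of $(\sigma,\tau)$ in $\Game$. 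Your proof of $W_B=\emptyset \Rightarrow W=\emptyset$ is therefore incomplete as written.

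The repair is to replace the probabilistic dichotomy by a possibilistic one on the tree of outcomes, which is what the paper does. Either for every finite prefix $\tau$ of an outcome of $(\sigma,\tau)$ some outcome extending $\tau$ revisits $A$ after $\tau$ --- in which case, extending each such visit along the attractor (permissible for \emph{outcomes}, since any successor in the support of a probabilistic vertex may be taken) and passing to the limit of the nested prefixes yields a single outcome visiting priority $d$ infinitely often, refuting sure $\Parity(\Omega_1)$ --- or there is a finite history $\tau^\ast$, necessarily of positive probability, such that \emph{no} outcome extending $\tau^\ast$ ever revisits $A$. In the latter case the continuation of $\sigma$ after $\tau^\ast$ is a genuine \PO-strategy of $\subGame{B}$ (it can never propose a successor in $A$), the outcomes and the conditional measure after $\tau^\ast$ coincide with those in $\subGame{B}$, and both branches of the spoiling guarantee pull back to $\Game$ by prefix-independence. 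With this substitution your argument closes; the rest (the trap transfer for the converse, and the attractor estimate) is correct.
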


\begin{longversion}
\begin{proof}
We first show the sure-almost-sure winning region for Player~$1$ being
empty in $\Game_{\upharpoonright B}$ is sufficient for it to be empty in $\Game$.
By the premise, given an arbitrary strategy $\sigma_1$ for Player~$1$ in $\Game_{\upharpoonright B}$
and a vertex $v$ in $\Game_{\upharpoonright B}$,
there exists a strategy $\sigma_2$ for Player~$2$ such that 
either $\Outcome(v,\sigma_1,\sigma_2) \cap \Parity^\textsf{c}(\Omega_1) \neq \emptyset$ 
or $\Prob_{v}^{\sigma_1,\sigma_2}(\Parity^\textsf{c}(\Omega_2)) > 0$, 
We call $\sigma_2$ a \emph{spoiling} strategy against $\sigma_1$, omitting 
the initial vertex because the same strategy $\sigma_2$ can be used from all vertices
(sometimes calling it simply a spoiling strategy, when $\sigma_1$ is clear from the context).

Given an arbitrary strategy $\sigma_1$ for Player~$1$ in $\Game$, we construct 
a spoiling strategy $\sigma_2$ for Player~$2$ as follows, for all histories $\tau \in V^* V_2$:
\begin{itemize}
\item if $\Last(\tau) \in V(\Omega_1,d)$, then $\sigma_2(\tau)$ is defined arbitrarily,
\item otherwise, if $\Last(\tau) \in A$ is in the positive attractor, then $\sigma_2(\tau) = \sigma_{Attr}(\tau)$
where $\sigma_{Attr}$ is the positive-attractor strategy (for Player~$2$) to $V(\Omega_1,d)$,
\item otherwise $\Last(\tau) \in B$; let $\rho$ be the longest (possibly empty)
prefix of $\tau$ such that $\Last(\rho) \notin B$ and let $v_0 \in B$ such that $\rho v_0$
is a prefix of $\tau$; consider the strategy $\sigma_{\rho}$
of Player~$1$ relative to $\rho$, defined by $\sigma_{\rho}(\tau') = \sigma_1(\rho \tau')$
for all $\tau' \in \{v_0\} B^* (B \cap V_1)$ with initial vertex $v_0$,
which is a strategy in $\Game_{\upharpoonright B}$. Define $\sigma_2(\tau)$
as what is played by a spoiling strategy of Player~$2$ against $\sigma_{\rho}$
at the history $\tau'$ such that $\rho \tau' = \tau$.
\end{itemize}

We show that $\sigma_2$ is a spoiling strategy against $\sigma_1$.
Consider the set $\Outcome(v,\sigma_1,\sigma_2)$ of outcomes, and two
possible cases:

\begin{itemize}
\item if there exists an outcome 
that visits the positive attractor $A$ infinitely often, then it also
visits the largest odd $\Omega_{1}$-priority $d$ infinitely often, violating the first
parity objective $\Parity(\Omega_1)$;
\item otherwise, 
there exists a prefix $\tau$ of an outcome such that in all outcomes
extending $\tau$, the suffix after $\tau$ never visits $A$; hence the history
$\tau$ occurs with a positive probability, and by the construction of $\sigma_2$ 
the play after $\tau$ is an outcome of a spoiling strategy of Player~$2$ in $\Game_{\upharpoonright B}$,
which ensures that, either there is an outcome that violates the first
parity objective $\Parity(\Omega_1)$, or the probability to violate
the second parity objective $\Parity(\Omega_2)$ is positive. 
By prefix-independence of parity objectives, the same holds from the initial 
vertex in $\Game$.
\end{itemize}

In both cases, the strategy $\sigma_2$ is a spoiling strategy against $\sigma_1$,
which concludes the first part of the proof.

To show that the condition in the lemma is necessary, recall that $B$ is a trap for \PT and thus a strategy for Player~$1$ in $\Game_{\upharpoonright B}$ is also a strategy in $\Game$,
and a spoiling strategy defined in $\Game$ can be immediately restricted
to a spoiling strategy in $\Game_{\upharpoonright B}$\todo{This is true since $B$ is a trap for \PT; may be we can add this.}.
\end{proof}
\end{longversion}

\section{Algorithm}\label{sec:alg}

We present an algorithm for the sure-almost-sure synthesis problem that computes the
winning region of Player~$1$ by a recursive procedure in the style of Zielonka's algorithm~\cite{Zie98}.
The base case is when the game is empty or contains only the sink vertex $v_{\sf sink}$. 
Then, two cases are possible:
\begin{itemize}
\item if the largest $\Omega_{1}$-priority $d$ in $V$ is even, 
the first step is to compute the almost-sure winning region $W_{AS}$ for the conjunction 
$\Parity(\Omega_1) \cap \Parity(\Omega_2)$ of parity objectives, which 
is a trap for Player~$2$ and an over-approximation of the sure-almost-sure winning region.
This can be done by expressing the conjunction of parity objectives as a single parity objective
over an exponentially larger game graph (Theorem~\ref{theo:D2PW-DPW}), and using standard algorithms
for almost-sure parity~\cite{CJH03,CJH04}. 
In the subgame $\Game_{\upharpoonright W_{AS}}$ induced by $V = W_{AS}$, 
compute the sure attractor $A$ for Player~$1$ to the vertices with priority~$d$, then solve (recursively) 
the subgame closure $\closure{\Game}_{V \backslash A}$. If all vertices in the 
subgame closure are winning for Player~$1$ (line~\ref{algo:sas-all-win}), then by the characterisation 
of Lemma~\ref{lem:char-even} all vertices in $\Game$ are also winning for Player~$1$.
Otherwise, some set $B \subseteq V \setminus A$ of vertices in the subgame closure is not winning for Player~$1$,
and by the key property of subgame closures\todo{Add this as a proposition?}, those vertices are also not winning in $\Game$.
Removing the positive attractor for Player~$2$ to $B$, it remains to solve (recursively)
the subgame $\Game_{\upharpoonright V \backslash B}$ (line~\ref{algo:sas-subgame-even}).
\item if the largest $\Omega_{1}$-priority $d$ in $V$ is odd, the algorithm proceeds analogously,
computing the positive attractor $A$ for Player~$2$ to the vertices with priority~$d$,
then solving (recursively) the subgame $\Game_{\upharpoonright V \backslash A}$, 
using the characterisation in Lemma~\ref{lem:char-odd} if no vertex is winning for 
Player~$1$ (line~\ref{algo:sas-all-lose-return}), or the key property of traps and 
subgame closures otherwise (line~\ref{algo:sas-recursive-return}).
\end{itemize}

\begin{algorithm}[t]
    \caption{{\sf solve}($\Game$) to compute the sure-almost-sure winning region for parity objectives}\label{algo:sas}
    \begin{algorithmic}[1]
        \Require{Stochastic two-dimensional parity game given by $\Game = (V_1, V_2, E)$ and $(\Omega_1, \Omega_2)$}
        \Ensure{Sure-almost-sure winning region $W_1$, and losing region $W_2$, for the parity objectives $\Parity(\Omega_1)$ and $\Parity(\Omega_2)$.}
        \If{$V=\emptyset$ or $V=\{v_{\sf sink}\}$}{ \Return V}\EndIf \label{algo:sas-base-case}
        \State $d = \max_{v \in V} \Omega_1(v)$
            \If{$d$ is even}
                \State $\Game = \Game_{\upharpoonright W_{AS}}$  \algorithmiccomment{ $W_{AS}$ is  the almost-sure winning region for $Parity(\Omega_1) \cap \Parity(\Omega_2)$. \label{algo-AS}}
                \State $V = W_{AS}$ \label{algo-newV}
                \State $Z = \{v \in V \mid \Omega_1(v) = d\}$ \label{algo:defZ} \algorithmiccomment{$Z$ may be empty.}
                \State $A = \SureAttr_1(Z)$ \label{algo:defA}
                \State $W'_1, W'_2 = {\sf solve}(\closure{\Game}_{V \backslash A})$ \label{Line:P1trap} 

                \If{$W'_2 = \emptyset$ \label{algo:sas-all-win}}
                    \State \Return $W_1 = V, W_2 = \emptyset$
                \Else
                    \State $B = \PosAttr_2(W'_2)$
                    \State $W_1'', W_2'' = {\sf solve}(\Game_{\upharpoonright V \backslash B})$  \label{algo:sas-subgame-even}
                    \State \Return $W_1 = W_1'', W_2 = W_2'' \cup B$
                \EndIf
            \Else \label{algo:main-else}
                \State $Z = \{v \in V \mid \Omega_1(v) = d\}$  \algorithmiccomment{$Z$ is nonempty.}
                \State $A = \PosAttr_2(Z)$
                \State $W'_1, W'_2 = solve(\Game_{\upharpoonright V \backslash A})$ \label{algo:defW1prime}

                \If{$W'_1 = \emptyset$ \label{algo:sas-all-lose}}
                    \State \Return $W_1 = \emptyset, W_2 = V$ \label{algo:sas-all-lose-return}
                \Else \label{algo:odd-else}
                    \State $B = \SureAttr_1(W'_1)$ \label{algo:defB}
                    \State $W_1'', W_2'' = {\sf solve}(\closure{\Game}_{V \backslash B})$ \label{algo:sas-subgame-odd} 
                    \State \Return $W_1 = B \cup W_1'' \setminus \{v_{\sf sink}\}, W_2 = W_2''$ \label{algo:sas-recursive-return}
                \EndIf
            \EndIf
    \end{algorithmic}
\end{algorithm}


We prove the correctness of Algorithm~\ref{algo:sas} by induction on the size of 
the game, defined as the number $\abs{V \setminus \{v_{\sf sink}\}}$ of vertices 
excluding the sink vertex.
Consider a stochastic game $\Game = \tuple{V, E, \delta}$, and in the base case 
where there is no vertices other than $v_{\sf sink}$, the winning region is $V$ 
(line~\ref{algo:sas-base-case} of Algorithm~\ref{algo:sas}).

Now let the number of vertices other then $v_{\sf sink}$ be $n+1$ ($n\geq 0$) and assume Algorithm~\ref{algo:sas} 
computes the winning regions in all games of size at most $n$.
Let $d$ be the highest $\Omega_1$-priority assigned to a vertex in $V$.

First, consider the case that $d$ is even.
The algorithm removes the vertices that are not almost-sure winning 
for the conjunction $\Parity(\Omega_1) \cap \Parity(\Omega_2)$ of parity objectives,
as they cannot be sure-almost-sure winning. Then let $V = W_{AS}$ (line~\ref{algo-newV}),
let $Z \subseteq V$ be the set of all vertices with priority $d$, and let $A = \SureAttr_1(Z)$.
Since $Z \neq \emptyset$ and $Z \subseteq A$, the set $V \backslash A$ has strictly fewer vertices (other than the sink) than $V$
and by the induction hypothesis we get that $W_1'$ (line~\ref{Line:P1trap}) is the winning region of Player~$1$ 
in the subgame closure $\closure{\Game}_{V \backslash A})$ (and $W_2' = V \setminus W_1'$ that of Player~$2$).
We now consider the following two cases.
\begin{enumerate}
\item if $W_2' = \emptyset$, then $W_1 = V$ by the characterisation of Lemma~\ref{lem:char-even}, 
that is Player~$1$ wins from every vertex in $\Game$.

\item if $W_2' \neq \emptyset$, then let $B = \PosAttr_2(W_2')$ in $\Game$,
and since $v_{\sf sink} \not\in W_2'$ we can apply the induction hypothesis 
to conclude that $W_1''$ is the winning region of Player~$1$ in the subgame
$\Game_{\upharpoonright V \backslash B}$ (line~\ref{algo:sas-subgame-even}).
Since $V \backslash B$ is a trap for Player~$2$, the key property of traps 
(see Section~\ref{sec:subgames_and_traps}) ensures that $W_1''$, returned by the algorithm, is also
winning for Player~$1$ in the game $\Game$. 
\end{enumerate}

We now consider the case that $d$ is odd.
Thus the set $Z \subseteq V$ all vertices with priority $d$ is nonempty, 
and let $A = \PosAttr_2(Z) \neq \emptyset$.
By the induction hypothesis, the set $W_1'$ (line~\ref{algo:defW1prime}) is the winning region of Player~$1$ 
in the subgame $\Game_{\upharpoonright V \backslash A}$.
We now consider the following two cases.

\begin{enumerate}
\item if $W_1' = \emptyset$, then $W_1 = \emptyset$ by the characterisation of Lemma~\ref{lem:char-odd}, 
that is Player~$1$ wins from nowhere in $\Game$.

\item if $W_1' \neq \emptyset$, then let $B = \SureAttr_1(W_1')$ in $\Game$.
We can apply the induction hypothesis 
to conclude that $W_1''$ is the winning region of Player~$1$ in the subgame
closure $\closure{\Game}_{V \backslash B}$ (line~\ref{algo:sas-subgame-odd}).
To show that $B \cup W_1''$ is the winning region of Player~$1$ in $\Game$,
note that $W_1'$ is a trap for Player~$2$ in the subgame $\Game_{\upharpoonright V \backslash A}$
and that $B$ is a trap for Player~$2$ in the game $\Game$.
Consider the strategy of Player~$1$ defined as follows: 
$(1)$ in $W_1'$, play like a sure-almost-sure winning strategy 
in the subgame $\Game_{\upharpoonright V \backslash A})$ (which exists and is also
a sure-almost-sure winning strategy in $\Game$); $(2)$ in $B \setminus W_1'$, play the (sure)
attractor strategy to reach $W_1'$; $(3)$ in $W_1''$, play like 
a sure-almost-sure winning strategy 
in the subgame $\Game_{\upharpoonright V \backslash A})$ (which is well-defined 
in the game $\Game$).
It is immediate that this strategy is sure-almost-sure winning from $B$
and we show that it is also the case from $W_1''$. To show sure winning for the
first parity objective, consider an outcome of the strategy from $W_1''$, and 
either the outcome remains in $W_1''$ and thus satisfies the parity condition by $(3)$,
or the outcome eventually leaves $W_1''$, thus reaching $B$ from where it 
follows some path that satisfies the parity condition by $(1)$ and $(2)$. 
Prefix-independence ensures that the outcome from $W_1''$ also satisfies the parity condition.
To show almost-sure winning for the second parity objective, the argument is 
similar, considering the conditional probability that the play eventually remains in $W_1''$
or leaves $W_1''$ and showing that in both cases,
the parity objective is satisfied almost-surely.
\end{enumerate}

The structure of Algorithm~\ref{algo:sas} is similar to Zielonka's algorithm for 
parity games~\cite{Zie98}, and so is the complexity analysis. The main 
difference is that we also need to solve stochastic games with a conjunction of 
almost-sure parity conditions (line~\ref{algo-AS}). Using the construction
in Section~\ref{sec:conjParity}, and denoting by $\ASPG(N,d)$ the time complexity 
of solving almost-sure parity games with $N$ vertices and largest priority $d$,
this can be done in time $\ASPG(N,d_1\cdot d_2)$ where $N = n \cdot \sqrt{\min(d_1^{d_2},d_2^{d_1})}$
and $n$ is the number of vertices in the game.

Let $T_n^{d_1}$ be the running time of Algorithm~\ref{algo:sas} on a game with $n$ vertices and 
largest $\Omega_1$-priority $d_1$. The running time also depends on the 
largest $\Omega_2$-priority $d_2$, but we keep this parameter implicit,
as it does not change through the recursive calls.

The crux of the analysis is to see that in the recursive calls $\closure{\Game}_{V \backslash A}$ (line~\ref{Line:P1trap} and~\ref{algo:defW1prime}),
the largest $\Omega_1$-priority in the game $\closure{\Game}_{V \backslash A}$ is at most $d_1 - 1$,
and in the recursive calls ${\sf solve}(\closure{\Game}_{V \backslash B})$  (line~\ref{algo:sas-subgame-even} and~\ref{algo:sas-subgame-odd}), the number of vertices in the game $\closure{\Game}_{V \backslash B}$ is at most $n - 1$.
Hence there is at most $n$ recursive calls for a given largest $\Omega_1$-priority $d_1$.

Besides the recursive calls and the solution of almost-sure parity games,
the other computation can be done in polynomial time (such as computing attractors),
thus bounded by $\ASPG(N,d)$. 
Hence we obtain the condition $T_n^{d_1} \leq n \cdot (T_{n-1}^{d_1} + \ASPG(N,d_1\cdot d_2))$,
which gives the (crude) bound $T_n^{d_1} \in O(n^{d_1} \cdot \ASPG(N,d_1\cdot d_2))$.

\section{Special Cases}

We derive from Algorithm~\ref{algo:sas} several results for subclasses of
the sure-almost-sure synthesis problem, giving a refined and more complete view of the problem.
The results are presented in Table~\ref{tab:memory_complexity}
where the rows (resp., columns) denote the sure (resp., almost-sure) satisfaction 
for B\"uchi, coB\"uchi, parity with fixed number of priorities (${\sf Parity^f}$), and parity condition.
For each cell, we give $(1)$ the smallest class of strategies that are sufficient for sure-almost-sure winning,
according to the classification types of memoryless, finite-memory, and infinite-memory 
strategies, and $(2)$ the complexity of solving the decision problem. 
All results are tight, in particular the complexity bounds: either the lower and upper bound 
match (completeness result), or the upper bound is the same as in a special case of
the problem for which no matching lower bound is known, 
such as games with a (single) parity objective (with an NP~$\cap$~coNP bound).

The new results in Table~\ref{tab:memory_complexity} appear in blue. 
The other results are either known from previous works, or trivially follow from the new results.


\begin{table}[!t]
\begin{center}
\caption{Memory requirement ($0$ stands for memoryless) and complexity bound for the special cases. New results appear in blue.}
\label{tab:memory_complexity}
\begin{tabular}{|l|r@{, }l|r@{, }l|r@{, }l|r@{, }l|}
\hline
\textbf{\begin{tabular}[c]{@{}l@{}}$\as, \longrightarrow$\\ $\sure, \downarrow$\end{tabular}} & 
\multicolumn{2}{l|}{\textbf{B\"uchi}}                                               & 
\multicolumn{2}{l|}{\textbf{coB\"uchi}}                                            & 
\multicolumn{2}{l|}{$\mathbf{Parity^f}$}                                              & 
\multicolumn{2}{l|}{\textbf{Parity}}                                                  \\ \hline
\textbf{B\"uchi}                                                                                    & finite & $\mathrm{P}$                                     & $\infty$~\cite{BRR17} & $\mathrm{P}$ & $\infty$ & $\mathrm{P}$                                      & $\infty$ & $\NP \cap \coNP$                                  \\ \hline
\textbf{coB\"uchi}                                                                                  & 0 & $\mathrm{P}$                                          & 0 & $\mathrm{P}$                                         & 0 & $\mathrm{P}$                                             & {\color{blue}0} & $\NP \cap \coNP$        \\ \hline
{$\mathbf{Parity^f}$}                                                                                & finite & $\mathrm{P}$                                     & $\infty$ & $\mathrm{P}$                                  & $\infty$ & {\color{blue}$\mathrm{P}$}     & $\infty$ & {\color{blue}$\NP \cap \coNP$} \\ \hline
\textbf{Parity}                                                                                    & {\color{blue}finite} & $\NP \cap \coNP$ & $\infty$ & $\NP \cap \coNP$                              & $\infty$ & {\color{blue}$\NP \cap \coNP$} & $\infty$ & $\coNP$-C~\cite{CP19}        \\ \hline
\end{tabular}
\end{center}
\end{table}

It is known that the general synthesis problem for sure-almost-sure winning is coNP-complete~\cite{CP19}
and that infinite memory is necessary for the conjunction of sure B\"uchi
and almost-sure coB\"uchi, already in MDPs~\cite{BRR17}. It is immediate from the running-time
analysis of Algorithm~\ref{algo:sas} that the problem is in P when 
both $d_1$ and $d_2$ are constant. The remaining results of 
Table~\ref{tab:memory_complexity} are established below.

\subsection{One of the parity objectives has a fixed number of priorities}

We show that the sure-almost-sure synthesis problem can be solved in NP when
one of the two parity objectives has a fixed number $d$ of priorities. 
Formally, an instance of the problem is now a stochastic game $\Game$ 
and two priority functions $\Omega_1: V \to \Nat$ and $\Omega_2: V \to \{0,\dots,d\}$, or vice versa, that is, $\Omega_1: V \to \{0,\dots,d\}$ and $\Omega_2: V \to \Nat$.
An NP algorithm is to guess the winning region $W_1$ of Player~$1$ and 
some of the sets that Algorithm~\ref{algo:sas} would compute when executed on $\closure{\Game}_{W_1}$ as a witness and verify that the guess is correct. 
Executing Algorithm~\ref{algo:sas} on a game where all vertices are winning sounds useless (we already know that all vertices are winning anyways), but it will be essential in order to define and prove the existence of certain sets (used by Algorithm~\ref{algo:sas} in intermediate computation) that we use to construct certificates in our NP algorithm.

We now give some intuition of how to define
certificates of correctness for a guessed (winning) region $W_1$. The certificates are defined
inductively for $W_1$ with largest $\Omega_1$-priority $d$, based on 
certificates for subsets of $W_1$ with largest $\Omega_1$-priority at most $d-1$ (following the recursive structure
of Algorithm~\ref{algo:sas}).
The formal proof shows that the inductive certificates can be checked in polynomial time.
First the region $W_1$ must be a trap for Player~$2$ (which can be checked in $O(E)$, 
and ensures that the subgame $\closure{\Game}_{W_1}$  is well defined).   

\begin{figure}[!t]
   \begin{minipage}[c]{.48\linewidth}
	   \begin{center}
		\input{figures/cert-even.tex}
	   \end{center}
	   \caption{Decomposition in a certificate for winning region $W_1$ with largest priority even. \label{fig:cert-even}}
   \end{minipage} \hfill
   \begin{minipage}[c]{.48\linewidth}
	   \begin{center}
	      \input{figures/cert-odd.tex}
	   \end{center}
	  \caption{Decomposition in a certificate for winning region $W_1$ with largest priority odd. \label{fig:cert-odd}}
   \end{minipage}
\end{figure}

\begin{enumerate}
\item If the largest $\Omega_1$-priority $d$ in $W_1$ is even (see \figurename~\ref{fig:cert-even}), the certificate consists
of $(1)$ a certificate that the region $W_1$ is almost-sure winning  
for the conjunction $\Parity(\Omega_1) \cap \Parity(\Omega_2)$ of parity objectives (such a 
certificate exists and can be verified in polynomial time by the remark at the end of Section~\ref{sec:conjParity}),
and $(2)$ (inductively) a certificate that the region $W_1 \setminus A$
is sure-almost-sure winning, where $A = \SureAttr_1(\{v \in W_1 \mid \Omega_1(v) = d\})$
(see also lines~\ref{algo:defZ},\ref{algo:defA} in Algorithm~\ref{algo:sas}).
The inductive certificate 
for $W_1 \setminus A$ exists by the characterisation given in Lemma~\ref{lem:char-even}.  
\item 
If the largest $\Omega_1$-priority $d$ in $W_1$ is odd (see \figurename~\ref{fig:cert-odd}), the certificate consists 
of $(1)$ a partition $U_1,\dots,U_k$ of $W_1$ into $k \leq \abs{W_1}$ blocks;   
$(2)$ nonempty sets $R_1,\dots,R_k$ such that 
$R_i \subseteq U_i \setminus \{v \in W_1 \mid \Omega_1(v) = d\}$
is a trap for Player~$2$ in 
$\closure{\Game}_{W_1 \setminus (U_1 \cup \dots \cup U_{i-1})}$
that contains no vertex with $\Omega_1$-priority $d$ and such that
$U_i = \SureAttr_1(R_i)$ (defined in the subgame 
$\closure{\Game}_{W_1 \setminus (U_1 \cup \dots \cup U_{i-1})}$), for all $1 \leq i \leq k$; and
$(3)$ (inductively) $k$ certificates showing that each region 
$R_i$ ($i=1,\dots,k$) is sure-almost-sure winning in the game $\closure{\Game}_{W_1 \setminus (U_1 \cup \dots \cup U_{i-1})}$.
Here, the certificate corresponds to the sets computed by Algorithm~\ref{algo:sas}
in the subgame $\closure{\Game}_{W_1}$ as follows: the set $R_1$ is the value of $W_1'$
(line~\ref{algo:defW1prime} of Algorithm~\ref{algo:sas}), the set $U_1$ is the value of $B$ 
(in the else-clause at line~\ref{algo:defB} of Algorithm~\ref{algo:sas}), and the sets $R_i$ and $U_i$ for $i=2,\dots,k$
are obtained in a similar in way in the $(i-1)$-th recursive call (at line \ref{algo:sas-subgame-odd}
of Algorithm~\ref{algo:sas}). Note that the set $U_1$ is well defined because 
the then-clause at line~\ref{algo:sas-all-lose-return} is never executed 
when the entire state space $V = W_1 \neq \emptyset$ is sure-almost-sure winning for Player~$1$, 
since if $W_1'$ was empty, then $W_1$ would be empty by the characterisation
given in Lemma~\ref{lem:char-odd}. 
\end{enumerate}

\begin{theorem}
The sure-almost-sure synthesis problem can be solved in NP when
one of the two parity objectives has a fixed number $d$ of priorities.
\end{theorem}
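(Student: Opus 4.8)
The plan is to cast the NP algorithm as guess-and-verify: nondeterministically guess a set $W_1$ containing the initial vertex, together with a certificate of the inductive shape described above, and then check in polynomial time that the certificate witnesses that every vertex of $W_1$ is sure-almost-sure winning. NP membership then reduces to three claims: (i) the certificate has polynomial size; (ii) it is verifiable in polynomial time; and (iii) a valid certificate for $W_1$ exists \emph{iff} all vertices of $W_1$ are winning. For the decision problem we accept exactly when the initial vertex lies in $W_1$; note we never certify that $W_1$ is the \emph{maximal} winning region, since soundness in (iii) already gives that $W_1$ is contained in the winning region and completeness provides the true winning region as one legal guess.

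The main obstacle is claim (i), and it is exactly where the fixed-index hypothesis is used. I would view the certificate as a tree mirroring the recursion of Algorithm~\ref{algo:sas} on $\closure{\Game}_{W_1}$: an even node $U$ has the single child $U \setminus \SureAttr_1(U(\Omega_1,d))$, and an odd node $U$ has the children $R_1,\dots,R_k$. Two structural facts make the tree small. First, every child avoids all priority-$d$ vertices (the even child removes $\SureAttr_1(U(\Omega_1,d)) \supseteq U(\Omega_1,d)$; each $R_i$ is required to contain no priority-$d$ vertex), so its largest $\Omega_1$-priority is strictly smaller, bounding the depth by $d_1+1$. Second, the children of a node are pairwise disjoint subsets of the node's region (the $R_i\subseteq U_i$ are disjoint because $U_1,\dots,U_k$ partition $U$). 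Hence the family of all regions is laminar, the regions at any fixed depth are disjoint with total size $\le n$, and there are only $O(n\cdot d_1)$ regions in all — so the set-descriptions stay polynomial despite the up-to-$n$ branching at odd nodes. The almost-sure certificates attached at even nodes (for $\Parity(\Omega_1)\cap\Parity(\Omega_2)$) are individually polynomial by Theorem~\ref{theo:D2PW-DPW} and the remark closing Section~\ref{sec:conjParity}: when one priority function has fixed range the conjunction reduces to a single parity condition on a polynomially larger game, and almost-sure winning for a single parity condition is in $\NP\cap\coNP$, hence polynomially certifiable. Summing $\sum_i \mathrm{poly}(\abs{R_i}) \le \mathrm{poly}(n)$ over the disjoint regions at each depth, and over the $\le d_1$ depths, keeps the whole certificate polynomial.

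Claim (ii) is routine: verification recurses over the tree, first checking that $W_1$ is a Player-$2$ trap (in $O(\abs{E})$, so that $\closure{\Game}_{W_1}$ is well defined); at an even node it re-verifies the guessed almost-sure certificate and recomputes $A=\SureAttr_1(U(\Omega_1,d))$; at an odd node it checks that $U_1,\dots,U_k$ partition $U$, that each $R_i$ is a Player-$2$ trap avoiding priority $d$ in $\closure{\Game}_{U\setminus(U_1\cup\cdots\cup U_{i-1})}$, and that $U_i=\SureAttr_1(R_i)$ there. Every step is polynomial and, by (i), only polynomially many nodes are touched.

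For claim (iii) I would lean entirely on the characterisation lemmas. \emph{Completeness}: if $W_1$ is the winning region, running Algorithm~\ref{algo:sas} on $\closure{\Game}_{W_1}$ produces, by its already-established correctness, precisely the certificate's sets — at an even node $W'_2=\emptyset$ by Lemma~\ref{lem:char-even}, and at an odd node each successive $W'_1=R_i$ is nonempty by Lemma~\ref{lem:char-odd}, so line~\ref{algo:sas-all-lose-return} is never reached and the blocks $U_i$ (from lines~\ref{algo:defW1prime},\ref{algo:defB}) exhaust $W_1$. \emph{Soundness} is bottom-up: at an even node, Lemma~\ref{lem:char-even} combines ``$W_1$ almost-sure winning for the conjunction'' with the inductive winning of $U\setminus A$ to give sure-almost-sure winning of $U$; at an odd node I would compose, as in the odd case of the algorithm's correctness proof, the inductively winning strategies on the traps $R_i$ with the Player-$1$ attractor strategies on $U_i\setminus R_i$, arguing that once a play settles in some $R_i$ it visits $U_i\setminus R_i$ (and thus any priority-$d$ vertex) only finitely often, so $\Parity(\Omega_1)$ holds surely and $\Parity(\Omega_2)$ almost-surely. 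The only delicate bookkeeping left is to make this composition respect the subgame-closure sinks introduced at each level, which the trap properties of Section~\ref{sec:subgames_and_traps} handle.
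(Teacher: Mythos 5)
Your proposal is correct and follows essentially the same route as the paper: guess $W_1$ together with the sets produced by the recursion of Algorithm~\ref{algo:sas} on $\closure{\Game}_{W_1}$, bound the certificate size by the disjointness of the odd-case blocks $R_i$ (so $\sum_i n_i \le n$ at each level) and by the depth bound $d_1+1$ coming from the strict decrease of the largest $\Omega_1$-priority, and invoke the fixed-index hypothesis only to make the almost-sure certificate for $\Parity(\Omega_1)\cap\Parity(\Omega_2)$ polynomially checkable via Section~\ref{sec:conjParity}. The paper packages the same argument as an induction on $d_1$ with an explicit complexity recurrence $C(n,d_1,d_2)\in O((d_1+1)\cdot(n^2+\mathsf{AS2P}(n,d_1,d_2)))$ rather than your laminar-tree counting, but the content is identical.
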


\begin{longversion}
\begin{proof}
We define an NP algorithm for the sure-almost-sure synthesis problem
with time complexity $C(n,d_1,d_2) \in O( (d_1+1) \cdot (n^2 + \mathsf{AS2P}(n,d_1,d_2)))$,
where $n = \abs{V}$ is the size of the state space and $d_i$ is the largest $\Omega_i$-priority,
and $\mathsf{AS2P}(n,d_1,d_2)$ is the (polynomial, see end of Section~\ref{sec:conjParity}) time 
complexity of solving the almost-sure synthesis problem for the conjunction of 
the parity objectives, using a nondeterministic algorithm.

Our algorithm and the proof are established by induction on the number of $\Omega_1$-priorities.
Given a game $\Game$, priority functions $\Omega_1$, $\Omega_2$, and initial vertex $v_0$,
the algorithm guesses a set $W_1$ containing $v_0$  
and verifies that $W_1$ is a trap for Player~$2$, which can be done in $O(n^2)$.
From now on, we only consider the subgame $\Game_{\upharpoonright W_1}$ 
and the algorithm will verify that all its vertices are sure-almost-sure winning. 

The base case is when the largest $\Omega_1$-priority $d_1$ is $0$. The sure parity 
objective is then trivially satisfied, and the algorithm needs to check
that Player~$1$ is almost-sure winning for the second parity objective,
which can be done in (nondeterministic) time complexity at most $\mathsf{AS2P}(n,0,d_2)$.
Therefore the time complexity of the base case is $C(n,0,d_2) \in O(n^2 + \mathsf{AS2P}(n,0,d_2))$.

The inductive case is when the largest $\Omega_1$-priority in $W_1$ is $d_1$ and we assume that
the complexity bound holds for all games with largest $\Omega_1$-priority $d_1 - 1$.

If $d_1$ is even, then by the characterisation of Lemma~\ref{lem:char-even}, 
the algorithm can guess and check a certificate witnessing 
that the region $W_1$ is a trap for Player~$2$ that is almost-sure winning for 
the conjunction $\Parity(\Omega_1) \cap \Parity(\Omega_2)$ of parity objectives,
and that the region $W_1 \setminus A$ where $A = \SureAttr_1(\{v \in W_1 \mid \Omega_1(v) = d\})$
is sure-almost-sure winning. 
The sure attractor can be computed in quadratic time,
and therefore the verification can be done in time \todo{$\mathsf{AS2P}(n,d_1,d_2)))$ is in $\NP$. Need to rephrase that the verification is in deterministic polynomial time.} at 
most $O(n^2 + \mathsf{AS2P}(n,d_1,d_2) + C(n,d_1 - 1,d_2))$, which is 
$O((d_1+1) \cdot (n^2 + \mathsf{AS2P}(n,d_1,d_2)))$ using the induction hypothesis.

If $d_1$ is odd, then the algorithm guesses 
a partition of $W_1$ into nonempty sets $U_1,\dots,U_k$ (thus $k \leq \abs{W_1}$)
and nonempty sets $R_1,\dots,R_k$ such that 
$R_i \subseteq U_i \setminus \{v \in W_1 \mid \Omega_1(v) = d_1\}$   
is a trap for Player~$2$ in 
$\Game_{\upharpoonright W_1 \setminus (U_1 \cup \dots \cup U_{i-1})}$
and such that $U_i = \SureAttr_1(R_i)$ (where the sure attractor is defined in the subgame 
$\Game_{\upharpoonright W_1 \setminus (U_1 \cup \dots \cup U_{i-1})}$), for all $1 \leq i \leq k$.
Being a partition, a trap, or a sure attractor can all be checked in quadratic time.
Moreover, the algorithm checks that each region $R_i$ ($i=1,\dots,k$) is sure-almost-sure 
winning in the game $\Game_{\upharpoonright W_1 \setminus (U_1 \cup \dots \cup U_{i-1})}$,
which can be done in (nondeterministic) time complexity at most 
$C(n_i,d_1-1,d_2) \in O(d_1 \cdot (n_i^2 + \mathsf{AS2P}(n_i,d_1,d_2)))$
by the the induction hypothesis, where $n_i = \abs{R_i}$ is the size of the traps. 
The total time needed in this case is at most 
$O\left(\sum_{i=1}^{k} n_i^2 + C(n_i,d_1-1,d_2)\right) \in O\left(\left( \sum_{i=1}^{k} n_i \right)^2 + C(\sum_{i=1}^{k} n_i,d_1-1,d_2)\right)$, 
which is in $O(n^2 + C(n,d_1-1,d_2))$ since $\sum_{i=1}^{k} n_i \leq n$, 
and therefore in $O((d_1+1) \cdot (n^2 + \mathsf{AS2P}(n,d_1,d_2)))$
as claimed.

The existence of the sets $U_1,\dots,U_k$ and $R_1,\dots,R_k$ is established
by considering the execution of Algorithm~\ref{algo:sas} on the subgame $\Game_{\upharpoonright W_1}$.
Note that the else-branch at line~\ref{algo:main-else} is executed because the
largest $\Omega_{1}$-priority in the game $\Game_{\upharpoonright W_1}$ is $d_1$, which is odd. 
The result of the recursive call at line~\ref{algo:defW1prime} defines the set $R_1 = W_1'$,
which is nonempty, otherwise $W_1 = \emptyset$ by the characterisation of Lemma~\ref{lem:char-odd}.
Hence the else-branch at line~\ref{algo:odd-else} is executed and defines the set $U_1 = B$ (line~\ref{algo:defB}).
We argue similarly in the recursive calls at line~\ref{algo:sas-subgame-odd},
defining $R_2,R_3,\dots$ and $U_2,U_3,\dots$ until the largest $\Omega_{1}$-priority in the 
game $\closure{\Game}_{V \backslash B}$ given as input to the recursive call
at line~\ref{algo:sas-subgame-odd} is less than $d_1$. Then it must be that
$W_1'' = V \backslash B$ and this set is closed under sure attractor for Player~$1$ 
(since all vertices are sure-almost-sure winning in $W_1$), and if that set is nonempty we define $R_k = U_k = W_1''$. 
It follows that the sets $U_1,\dots,U_k$ and $R_1,\dots,R_k$ exist and 
have the properties listed above (in particular $R_i$ being a trap in the appropriate 
subgame). 
\end{proof}
\end{longversion}

Now we analyze the amount of memory in winning strategies for relevant special
cases of parity objectives.

\subsection{No memory for sure coB\"uchi and almost-sure parity}\label{sec:coB-parity}
For the conjunction of sure coB\"uchi and almost-sure parity objectives,
memoryless strategies are sufficient for Player~$1$. To see this,
consider the correctness proof of Algorithm~\ref{algo:sas} in the case
the largest $\Omega_{1}$-priority $d$ is odd (which is the case for the 
coB\"uchi objective). The proof constructs a sure-almost-sure winning
strategy in the winning region, defined over the three subregions
$W_1'$, $B \setminus W_1'$, $W_1''$. 
In $W_1'$, the largest $\Omega_{1}$-priority is~$0$ and the sure parity objective 
is therefore trivially satisfied. 
Hence an almost-sure winning strategy for the second parity objective is good enough,
and thus can be chosen memoryless~\cite{CJH04}. In the region $B \setminus W_1'$, a sure attractor
strategy is used, which is memoryless. Finally, in $W_1''$ we can use an argument
by induction over the number of vertices (memoryless strategies trivially suffice
in empty games) to show that memoryless strategies are sufficient for Player~$1$.

\begin{theorem}
Memoryless strategies are sufficient for combined sure coB\"uchi and almost-sure parity
objectives.
\end{theorem}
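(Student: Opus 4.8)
The plan is to prove the statement by induction on the number $\abs{V \setminus \{v_{\sf sink}\}}$ of non-sink vertices, following exactly the region decomposition produced by the odd branch of Algorithm~\ref{algo:sas}, and showing that each piece of the winning strategy built in its correctness proof can be chosen memoryless. For a sure coB\"uchi objective the priority function $\Omega_1$ has range $\{0,1\}$, so the largest $\Omega_1$-priority is $d=1$; the degenerate case $d=0$ (priority $1$ never occurs) makes the sure coB\"uchi condition vacuous, and a memoryless almost-sure $\Parity(\Omega_2)$ strategy, which exists by~\cite{CJH04}, already wins. That case, together with the empty/sink game, serves as the base. Hence in the inductive step we are always in the else-branch at line~\ref{algo:main-else}, with $A = \PosAttr_2(V(\Omega_1,1))$, $W_1'$ the recursively computed winning region in $\Game_{\upharpoonright V \setminus A}$, $B = \SureAttr_1(W_1')$, and $W_1''$ the winning region in $\closure{\Game}_{V \setminus B}$. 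I restrict attention to a nonempty winning region $W_1$; then $W_1' \neq \emptyset$ by the characterisation of Lemma~\ref{lem:char-odd}.

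The central observation is that, since $V(\Omega_1,1) \subseteq A$, the set $V \setminus A$ contains no vertex of $\Omega_1$-priority $1$. Therefore every play that stays in $\Game_{\upharpoonright V \setminus A}$ visits priority $1$ zero times, the sure coB\"uchi condition is met by any strategy, and sure-almost-sure winning there collapses to almost-sure winning for $\Parity(\Omega_2)$. I would then invoke~\cite{CJH04} to choose a memoryless almost-sure $\Parity(\Omega_2)$ strategy $\sigma'$ on $W_1'$; since $W_1'$ is a trap for Player~$2$ in $\Game_{\upharpoonright V \setminus A}$, this $\sigma'$ keeps the play in $W_1'$. On $B \setminus W_1'$ I would use the sure attractor strategy to $W_1'$, which is memoryless. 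On $W_1''$ I would use the strategy obtained from the recursive call on $\closure{\Game}_{V \setminus B}$: this game still has $\Omega_1$-priorities in $\{0,1\}$ and strictly fewer non-sink vertices (since $\emptyset \neq W_1' \subseteq B$), so by the induction hypothesis it admits a memoryless sure-almost-sure winning strategy $\sigma''$.

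To finish, I would observe that the three regions $W_1'$, $B \setminus W_1'$, and $W_1'' \setminus \{v_{\sf sink}\}$ are pairwise disjoint and partition the winning region $W_1 = B \cup W_1'' \setminus \{v_{\sf sink}\}$, so gluing $\sigma'$, the attractor strategy, and $\sigma''$ on their respective regions yields a single memoryless strategy. This glued strategy is precisely the one constructed in the correctness proof of the odd case of Algorithm~\ref{algo:sas}, whose winning guarantee (sure $\Parity(\Omega_1)$ and almost-sure $\Parity(\Omega_2)$) nowhere uses the pieces having memory; hence the memoryless combination is sure-almost-sure winning.

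The main obstacle, and the only place that genuinely uses the coB\"uchi hypothesis, is the reduction on $V \setminus A$: one must verify that positively attracting all priority-$1$ vertices into $A$ leaves a subgame whose sure coB\"uchi requirement is vacuous, so that a memoryless almost-sure parity strategy suffices there. This is exactly what fails for a general sure parity objective, where higher even priorities remaining in $V \setminus A$ would still need to be secured surely and infinite memory becomes necessary. A secondary point to check carefully is that the recursion stays within the coB\"uchi class and strictly decreases the vertex count, so that the induction is well-founded and the pieces combine without reintroducing memory.
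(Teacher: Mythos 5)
Your proposal is correct and follows essentially the same route as the paper: it uses the odd branch of Algorithm~\ref{algo:sas} with $d=1$, observes that $V \setminus A$ (hence $W_1'$) contains no $\Omega_1$-priority-$1$ vertex so a memoryless almost-sure $\Parity(\Omega_2)$ strategy suffices there, uses the memoryless attractor strategy on $B \setminus W_1'$, and handles $W_1''$ by induction on the number of vertices. Your write-up merely fills in details the paper leaves implicit (the base cases, the trap properties guaranteeing the pieces glue on disjoint regions, and the strict decrease of the vertex count).
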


\subsection{Finite memory for sure parity and almost-sure B\"uchi}\label{sec:parity-B}
For the conjunction of sure parity and almost-sure B\"uchi objectives,
we show that finite memory is sufficient for Player~$1$,
namely memory of size $O(n\cdot d_1)$ where $n$ is the number of vertices
in the game and $d_1$ is the largest $\Omega_{1}$-priority.
The proof is by induction on $d_1$. The case $d_1 = 0$ reduces to 
just almost-sure parity (the sure parity objective is trivially satisfied),
and the case $d_1 = 1$ corresponds to the conjunction of sure coB\"uchi and 
almost-sure parity (B\"uchi) objectives. In both cases, memoryless
strategies are sufficient, thus memory $O(1)$ (see Section~\ref{sec:coB-parity} 
for the latter). 

We proceed by induction for $d_1 \geq 2$. First, if $d_1$ is even, then we construct a sure-almost-sure
winning strategy as follows. Using the characterisation of Lemma~\ref{lem:char-even},
the winning region $W_1$ must be almost-sure winning for both the parity and B\"uchi objectives,
and it can be decomposed into the sure attractor $A$ to the vertices
with $\Omega_{1}$-priority $d_1$, and the complement $B = W_1 \setminus A$
that induces a sure-almost-sure winning subgame closure $\closure{\Game}_{B}$.
By induction hypothesis, there exists a sure-almost-sure winning strategy
in $\closure{\Game}_{B}$ with memory $O(n\cdot (d_1 - 1))$. 
A sure-almost-sure winning strategy in $W_1$ is defined informally as follows:
as long as the current vertex is in $B$, play according to the sure-almost-sure 
winning strategy in $\closure{\Game}_{B}$. Whenever the current vertex is in $A$,
play the sure attractor strategy until reaching a vertex with $\Omega_{1}$-priority $d_1$,
and then play for $n= \abs{W_1}$ rounds a (memoryless) almost-sure winning strategy
for the B\"uchi objective. The memory of the strategy is $n+O(n\cdot (d_1 - 1))$
thus $O(n\cdot d_1)$\todo{Why can't we do max of the two and restrict the memory to $n$?} 
and we argue that the strategy is sure-almost-sure winning.
To show that the (first) parity objective is satisfied surely, consider an 
arbitrary outcome of the strategy, and either it visits $A$ infinitely often,
and the also $\Omega_{1}$-priority $d_1$ infinitely often, or it eventually
remains in $B = W_1 \setminus A$ and thus follows a the sure-almost-sure 
winning strategy in $\closure{\Game}_{B}$. In both case the outcome satisfies
the (first) parity objective. To show that the B\"uchi objective is satisfied 
almost-surely, notice that, $(1)$ conditional to visiting $A$ infinitely often,
there is a bounded probability at least $\nu^n$, where $\nu$ is the smallest non-zero 
probability in the game,   
to visit a B\"uchi vertex
after each visit to $A$, and thus probability~$1$ to visit a B\"uchi vertex
infinitely often; $(2)$ conditional to visiting $A$ finitely often,
the play eventually remains in $B$ where the B\"uchi objective is satisfied 
with probability~$1$. Hence in both cases, the B\"uchi objective is satisfied 
almost-surely.

Second, if $d_1$ is odd, then the strategy construction in the correctness
proof of Algorithm~\ref{algo:sas} gives a sure-almost-sure winning strategy
with memory $\sum_{i=1}^{k} O(n_i \cdot (d_1-1))$ where  $\sum_{i=1}^{k} n_i = n$,
and thus memory $O(n\cdot d_1)$ is sufficient. 

\begin{theorem}
Finite-memory $O(n\cdot d_1)$ strategies are sufficient for 
combined sure parity and almost-sure B\"uchi objectives, where $n$ is the number of vertices
and $d_1$ is the largest priority in the sure parity objective.
\end{theorem}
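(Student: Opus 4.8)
The plan is to prove the bound by induction on the largest $\Omega_1$-priority $d_1$ of the sure parity objective, following the recursive decomposition underlying Algorithm~\ref{algo:sas}. For the base cases, when $d_1 = 0$ the sure parity condition is trivially satisfied and the problem reduces to almost-sure B\"uchi (a single almost-sure parity objective), for which memoryless strategies suffice by~\cite{CJH04}; when $d_1 = 1$ the problem is exactly the combination of sure coB\"uchi and almost-sure B\"uchi, so memoryless strategies suffice by the result of Section~\ref{sec:coB-parity}. In both cases the memory used is $O(1) \subseteq O(n \cdot d_1)$, establishing the base.

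For the inductive step with $d_1 \geq 2$, I would split on the parity of $d_1$. If $d_1$ is even, I would invoke Lemma~\ref{lem:char-even}: the winning region $W_1$ is almost-sure winning for $\Parity(\Omega_1) \cap \Parity(\Omega_2)$ and decomposes into the sure attractor $A = \SureAttr_1(\{v \in W_1 \mid \Omega_1(v) = d_1\})$ and its complement $B = W_1 \setminus A$, where $\closure{\Game}_{B}$ is sure-almost-sure winning with largest $\Omega_1$-priority at most $d_1 - 1$. By the induction hypothesis there is a winning strategy on $\closure{\Game}_{B}$ using $O(n \cdot (d_1 - 1))$ memory. The composite strategy plays that subgame strategy while the play remains in $B$; upon entering $A$ it follows the memoryless attractor strategy to a priority-$d_1$ vertex and then, for exactly $n = |W_1|$ rounds, a fixed memoryless almost-sure B\"uchi strategy, before resuming. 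A single $O(n)$-valued round counter on top of the $O(n(d_1-1))$ subgame memory gives the claimed $O(n \cdot d_1)$ bound.

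I would then verify correctness in two parts. For sure winning of $\Parity(\Omega_1)$: every outcome either visits $A$ --- hence priority $d_1$ --- infinitely often, or eventually stays in $B$ and inherits sure parity from the subgame strategy, so $\Parity(\Omega_1)$ holds in all outcomes. For almost-sure satisfaction of the B\"uchi objective I would condition on whether $A$ is visited infinitely or finitely often: in the finite case the play ultimately follows the almost-sure subgame strategy and sees B\"uchi vertices with probability one. The delicate part is the infinite case, where I would argue that each B\"uchi phase --- run for $n$ rounds after a visit to a priority-$d_1$ vertex --- succeeds with probability at least $\nu^n$, where $\nu$ is the least positive transition probability; since such phases recur infinitely often with conditional success probability bounded below by $\nu^n$, a L\'evy (conditional Borel--Cantelli) argument yields infinitely many B\"uchi visits almost surely. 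For $d_1$ odd, I would instead reuse the strategy already constructed in the correctness proof of Algorithm~\ref{algo:sas}, based on the decomposition of $W_1$ via Lemma~\ref{lem:char-odd} into attractor-closed blocks $U_1, \dots, U_k$ with cores $R_1, \dots, R_k$; its memory is $\sum_{i=1}^{k} O(n_i \cdot (d_1 - 1))$ with $\sum_{i} n_i = n$, again $O(n \cdot d_1)$.

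The main obstacle I anticipate is making the infinitely-often B\"uchi argument rigorous in the two-player (rather than MDP) setting, since \PT acts adversarially and the $\nu^n$ lower bound must hold \emph{uniformly} over \PT's responses throughout each $n$-round B\"uchi phase; I would establish a per-phase lower bound that is independent of the history and of \PT's strategy, so that the conditional success probabilities are uniformly bounded below and the probability-one conclusion follows. A secondary point needing care is that entering a B\"uchi phase inside $A$ must not break the sure satisfaction of $\Parity(\Omega_1)$: here it is crucial that each phase has bounded length $n$ and is always preceded by a visit to the maximal even priority $d_1$, which is exactly what preserves sure parity.
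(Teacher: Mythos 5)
Your proposal is correct and follows essentially the same route as the paper: induction on $d_1$ with memoryless base cases $d_1\in\{0,1\}$, the even case handled via the decomposition of Lemma~\ref{lem:char-even} into the sure attractor $A$ and the subgame closure on $B$, an $n$-round B\"uchi phase after each visit to priority $d_1$ with a uniform $\nu^n$ success bound, and the odd case delegated to the strategy built in the correctness proof of Algorithm~\ref{algo:sas}. Your added care about the uniformity of the $\nu^n$ bound over \PT's responses is a reasonable sharpening of a step the paper states more briefly, but it does not change the argument.
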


\bibliography{biblio} 

\end{document}